\documentclass[acmtog,screen,nonacm]{acmart}

\usepackage{booktabs} 
\usepackage{soul}
\usepackage{subcaption}

\usepackage{tabularx}
\usepackage{caption}
\usepackage{tikz}
\usetikzlibrary{calc,arrows.meta}
\usepackage{placeins}
\usepackage{float}
\usepackage{wrapfig}
\usepackage{comment}

\AtBeginDocument{%
  \setlength{\abovedisplayskip}{6pt plus 2pt minus 2pt}
  \setlength{\belowdisplayskip}{6pt plus 2pt minus 2pt}
  \setlength{\abovedisplayshortskip}{2pt plus 1pt}
  \setlength{\belowdisplayshortskip}{4pt plus 2pt minus 2pt}
}

\allowdisplaybreaks

\usepackage[ruled,vlined]{algorithm2e} 

\SetAlFnt{\small}
\SetAlCapFnt{\small}
\SetAlCapNameFnt{\small}
\SetAlCapHSkip{0pt}

\usepackage{hyperref}

\usepackage[labelfont=bf,format=plain]{caption}
\newcommand{\commentText}[1]{#1}
\newcommand{\todo}[1]{\commentText{{\color{red}[\textbf{\textsc{TODO}}: \textit{#1}]}}}

\newcommand{\add}[1]{#1}
\newcommand{\del}[1]{}
\newcommand{\madd}[1]{#1}
\newcommand{\mdel}[1]{}

\usepackage{xcolor}
\definecolor{addcolor}{RGB}{0, 128, 0}
\definecolor{pink}{RGB}{227, 119, 194}
\definecolor{orange}{RGB}{255, 127, 14}
\definecolor{blue}{RGB}{31, 119, 180}
\definecolor{olive}{RGB}{188, 189, 34}

\renewcommand{\vec}[1]{\mathbf{#1}}

\newcommand{\MM}{\mathbf{M}}

\acmJournal{TOG}

\setcopyright{none}

\begin{document}
\title{Divide and Truncate: A Penetration and Inversion Free Framework for Coupled Multi-physics Systems}

\author{Anka H. Chen}
\orcid{0000-0002-5819-3453}
\affiliation{%
  \institution{NVIDIA}
  \city{Kirkland}
  \country{United States of America}}
\email{ankachan92@gmail.com}

\author{Jerry Hsu}
\orcid{0000-0003-2333-0224}
\affiliation{%
  \institution{ZeroMatter}
  \city{Mountain View}
  \country{United States of America}}
\email{jerry.hsu.research@gmail.com}

\author{Youssef Ayman}
\orcid{0009-0005-5149-8261}
\affiliation{%
  \institution{The American University in Cairo}
  \city{Cairo}
  \country{Egypt}}
\email{youssefayman55200@gmail.com}

\author{Miles Macklin}
\orcid{0000-0003-3954-8009}
\affiliation{%
  \institution{NVIDIA}
  \city{Auckland}
  \country{New Zealand}}
\email{mmacklin@nvidia.com}

\renewcommand\shortauthors{Chen, A.H. et al}

\begin{abstract}
We present \emph{Divide and Truncate} (DAT), a unified framework for coupling multi-physics systems through penetration-free collision handling, including rigid bodies, volumetric soft bodies, thin shells, rods, and animated objects. By partitioning the ambient space into exclusive regions and truncating displacements to remain within them, DAT guarantees penetration-free contact resolution. Our \emph{Planar-DAT} variant further refines this by restricting only motion toward nearby surfaces, leaving tangential movement unconstrained, which addresses the artificial damping and deadlock problems of previous works. The framework is material-agnostic: each object responds to contact without knowledge of the opposing body's physics. Our method is also solver-agnostic; it can be integrated seamlessly with any iterative optimizer as a post-processing step, enabling robust simulation of complex multi-body interactions.
\end{abstract}

%
%
\begin{CCSXML}
<ccs2012>
   <concept>
       <concept_id>10010147.10010371.10010352.10010238</concept_id>
       <concept_desc>Computing methodologies~Physical simulation</concept_desc>
       <concept_significance>500</concept_significance>
       </concept>

 </ccs2012>
\end{CCSXML}

\ccsdesc[500]{Computing methodologies~Physical simulation}
\ccsdesc[300]{Computing methodologies~Collision detection}

%
%

\begin{teaserfigure}
\captionsetup{skip=5pt}
\vspace{-6em}
\includegraphics[width=0.49\linewidth,trim=450 190 300 300,clip]{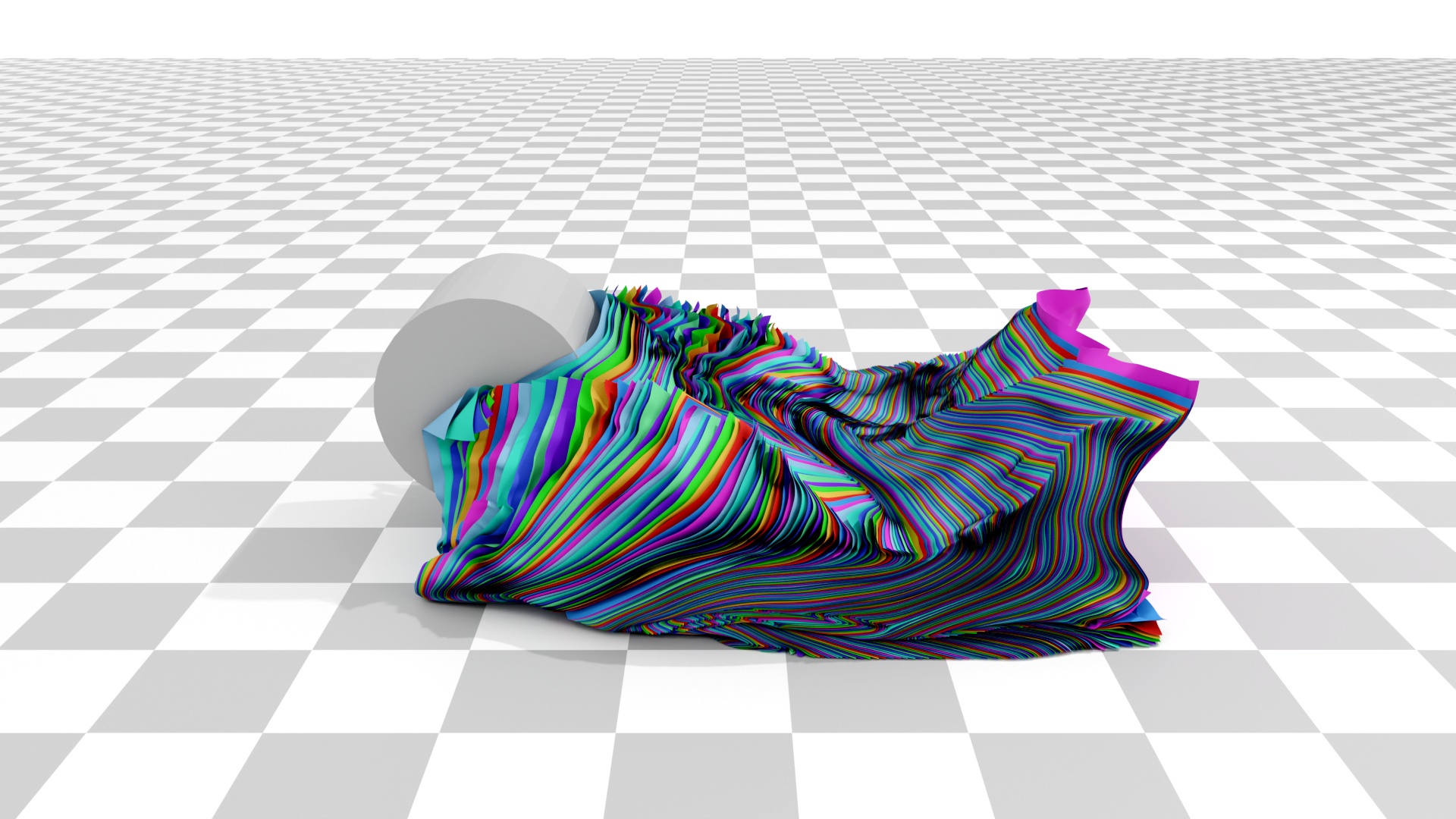}\hspace{-10px}
\raisebox{-5em}{%
  \includegraphics[
    width=0.49\linewidth,
    trim=490 0 350 0,
    clip,
    angle=10
  ]{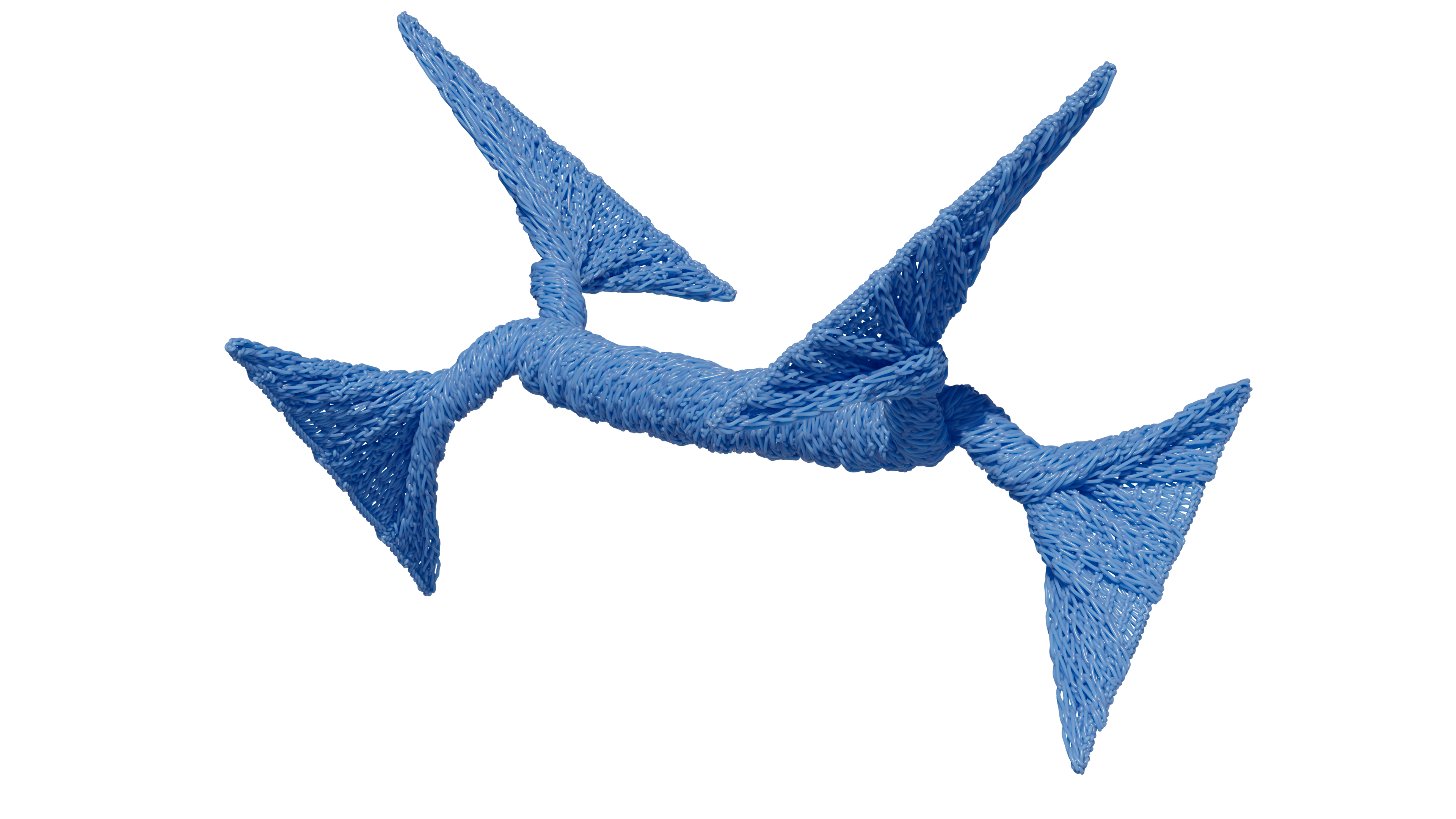}
}\\[-5em]
\includegraphics[width=0.49\linewidth,trim=100 420 200 450,clip]{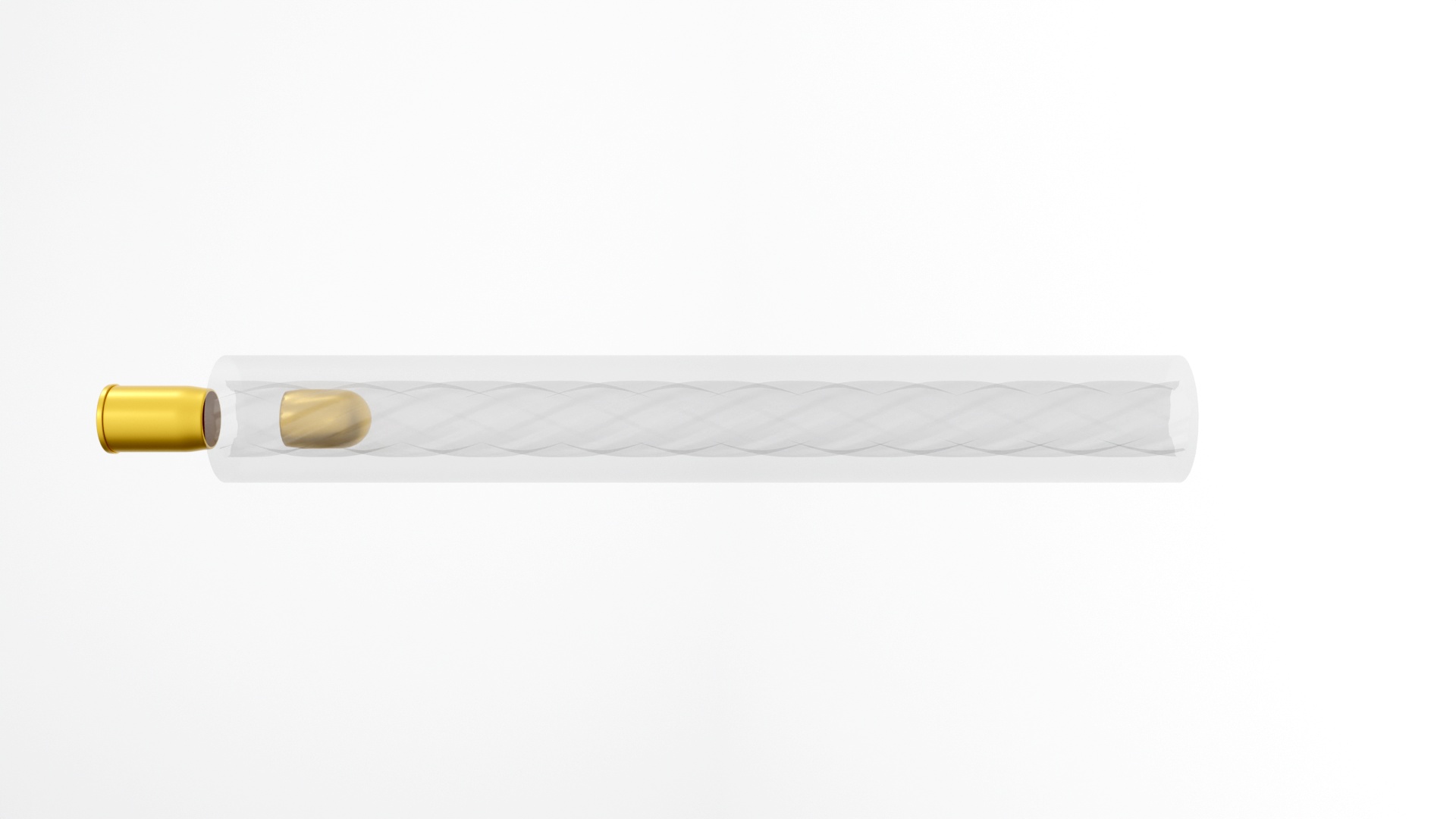}
\includegraphics[width=0.49\linewidth,trim=100 420 200 450,clip]{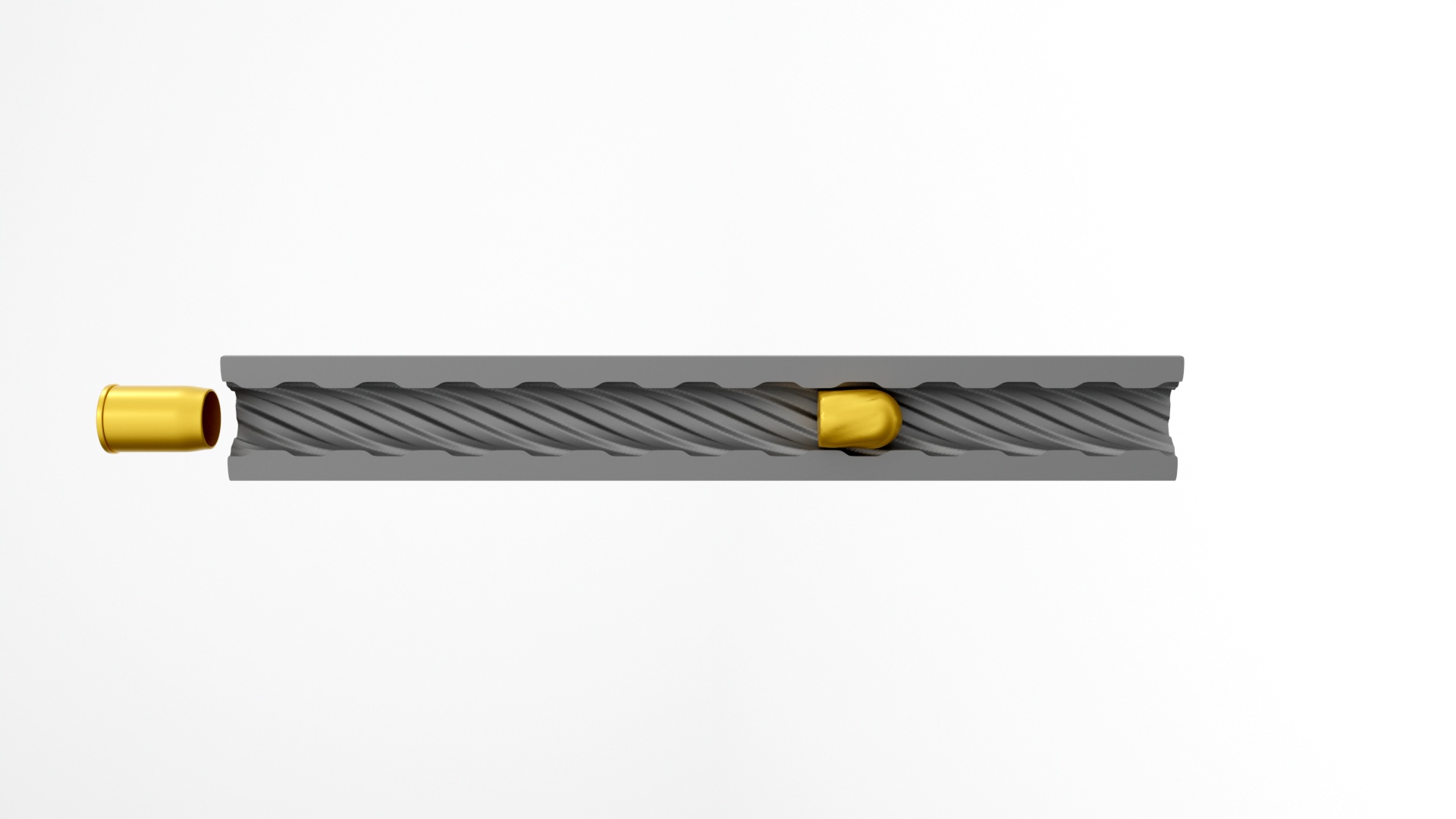}\hfill\\[-1.5em]
\caption{Our \emph{Planar Divide and Truncate} method enables efficient, penetration-free simulation of diverse multi-physics scenarios.
\textbf{Top Left:} 200 layers of cloth (4M vertices, 7.84M triangles) dropping onto a cylinder, creating more than 20 million collisions.
\textbf{Top Right:} 130k segments of yarn intertwined to form two pieces of cloth, which are then further twisted and tied around one another. 
\textbf{Bottom:} A lead bullet deforming through a rifled barrel at more than $370\,\mathrm{m/s}$, shown with transparent barrel (left) and cross-section view (right).
}
\Description{Teaser figure showing simulation results: 200 layers of cloth dropping (top left), a double-twist DAT example (top right), and a bullet through a rifled barrel with transparent and cross-section views (bottom row).}
\label{fig:teaser}
\end{teaserfigure}

\maketitle

\section{Introduction}

Penetration-free techniques have become a cornerstone of robust physics-based animation. In applications ranging from robotics \cite{du2024embedded, zhu2024efficient} to virtual garment design \cite{Liu:2024:ADG} and geometry processing \cite{fang2021guaranteed}, maintaining strict non-intersection guarantees is essential for both physical plausibility and downstream processing. However, achieving penetration-free guarantees efficiently across diverse physical systems remains a significant challenge.

The dominant paradigm employs continuous collision detection (CCD) embedded directly within a line-search framework, as in Incremental Potential Contact (IPC)~\cite{li2020incremental}. This strategy tightly couples collision handling with the solver: CCD must be executed at every solver iteration, and the entire mesh's motion can be globally restricted by the earliest detected collision. Moreover, when multiple materials or physical models are coupled—such as cloth interacting with rigid bodies or codimensional objects—each requires its own specialized CCD formulation and tuning~\cite{li_codimensional_2021, ferguson2021intersection, lan2022affine}, complicating the integration of heterogeneous multi-physics systems.

An alternative family of trust-region methods bounds each vertex's motion using discrete collision detection. Methods such as conservative advancement~\cite{zhang2006interactive, tang2009c, wu2020safe, chen2025offset} limit displacement to half the distance to nearby primitives, preventing penetration without per-iteration CCD. However, these isotropic bounds restrict motion in \emph{all} directions—not just toward collisions—introducing artificial damping and causing deadlock in dense contact.

Both paradigms share a common limitation: they guard against worst-case motions. Line-search methods stop the global solver step at the earliest detected collision, regardless of whether most regions could safely advance further. Trust-region methods assume motion in any direction is potentially dangerous and thus impose isotropic bounds. Yet in practice, the displacement direction is likely known before constraints are enforced, and most displacements move tangentially or away from nearby surfaces rather than directly toward them due to the repulsive force. Those displacements should not be truncated at all.

We present \emph{Divide and Truncate} (DAT), a universal framework for penetration-free simulation that exploits displacement direction to achieve both efficiency and accuracy. Our key contributions are:
\begin{itemize}
    \item \textbf{Unified Formulation.} We reformulate previous trust-region methods as \emph{Isotropic Divide and Truncate}, a specific instance of a general DAT framework, revealing the limitations of isotropic approaches.
    
    \item \textbf{Direction-aware truncation.} We propose \emph{Planar-DAT}, which uses direction-aware half-space divisions instead of isotropic spherical bounds. By restricting only motion \emph{toward} surfaces, Planar-DAT eliminates artificial damping and deadlock while enabling natural dynamics in dense contact.
    
    \item \textbf{Material and Solver Agnostic.} The same truncation scheme works universally across cloth, soft bodies, rods, rigid bodies, and animated objects. Each object can respond to contact without knowledge of the other's material or physical model, enabling straightforward coupling of heterogeneous systems. As a post-processing step decoupled from the optimizer, our method integrates seamlessly with any iterative solver.
\end{itemize}

We demonstrate Planar-DAT on challenging scenarios where it prevails over previous methods, including 200 layers of cloth generating over 20 million simultaneous contact pairs, simulated at interactive rates with strict penetration-free guarantees (\autoref{fig:teaser}). Where CCD-based line search is over an order of magnitude slower per iteration, and Isotropic-DAT locks up or introduces crippling damping, Planar-DAT delivers both speed and natural dynamics---preserving over twice the intended motion without sacrificing safety.
\section{Related Works}

As optimizers struggle to capture the discontinuous or asymptotic nature of hard collisions, additional enforcement mechanisms are often used to strictly guarantee penetration-free. In physics-based simulations, these enforcement mechanisms typically come in two flavors: line-search and trust-region methods. 

\subsection{Line-Search-Based Methods}

As optimizers commonly construct their systems of equations based on the current configuration, they often fail to capture distant contacts. Even when contacts are properly registered, the local evaluation can often produce an insufficient signal to guarantee penetration-free. To address this issue, the line-search family of methods like Incremental Potential Contacts (IPC) \cite{li2020incremental} searches for the time of first impact (TOI) along a descent direction. By limiting the step size to always be smaller than the TOI, IPC guarantees that contacts can draw closer without penetration. This gives the local approximation a chance to be reevaluated more accurately closer to the asymptotic contact barrier and to be ultimately resolved. 

Various extensions built on the same line-search scheme have been proposed. \citet{li_codimensional_2021} extended IPC to handle co-dimensional objects and strain limiting. \citet{lan2021Medial} derived closed-form distance computations for barriers defined on medial geometry for reduced subspace simulations. \citet{ferguson2021intersection} use approximately linear subdivisions to allow for rigidbodies under rotational motion. \citet{lan2022affine} proposed the use of a generalized affine velocity for almost-rigidbody motion. \citet{lan2022penetration} combined the line-search strategy with projective dynamics to speed up elasticity. \citet{lan2023Stencil} employed a coordinate descent scheme with CCD to accelerate convergence and parallelism\add{, and introduced a Local CCD technique that only run collision filtering for existing collision pairs from the most recent regular CCD, mitigating TOI locking. However,  it does not provide strict penetration-free guarantees on its own---new collision pairs arising from locally modified displacements (from the local collision filtering) are not checked, so a global CCD pass remains necessary as a safeguard}. \citet{huang2025} formulated a discretization independent barrier to remove spurious forces while retaining penetration-free. 

\subsection{Trust-Region-Based Methods}
Trust-region methods establish a region within which the optimization domain is well-behaved---singularity-free, well-approximated quadratically, or of positive curvature. Since trust regions are easy to construct conservatively and simple to enforce, they offer a cheap and efficient way to ensure convergence. Line-search methods can be seen as a special case where the trust region is a line segment along the descent direction; general trust regions are more flexible as they need not be restricted to a line.

Trust-region methods are well-suited for constrained optimization where constraint satisfaction is critical~\cite{DBLP:journals/mp/Yuan15}. Constraints can be incorporated directly into the trust region formulation~\cite{burke1992robust, more1983recent}, typically linear and convex~\cite{conn1988global, burke1990convergence}.

In graphics, this was previously leveraged to safely approximate contact singularities \cite{sassen2024}. Even without a penetration-free guarantee on the whole trajectory, trust regions can still act as an effective mechanism for convergence enforcement. However, trust-region methods for fully penetration-free trajectories have not been extensively explored in simulation. Unlike IPC, which combines CCD with line search, trust-region methods use \emph{discrete collision detection} (DCD) to define per-vertex (or per-rigid-body) trust regions that constrain movements to prevent penetration.

This idea was initially explored for rigid body dynamics, termed Conservative Advancement. \citet{zhang2006interactive} uses extremal vertex queries to find motion bounds for objects with constant translational and rotational velocities; \citet{tang2009c} extends this to triangulated models without assumptions on geometry or topology.
For cloth simulation, \citet{Wu:2020:SFR} identified vertex displacement constraints to prevent self-intersection. \citet{Wang:2023:FGB} and \citet{chen2025offset} utilized these constraints within step-and-project frameworks for fast and realistic simulation.

Our work follows this trust-region paradigm but addresses the fundamental limitation shared by all prior methods: their reliance on isotropic bounds that ignore displacement direction. By reformulating the trust region as direction-aware half-spaces, Planar-DAT restricts only motion toward nearby surfaces, avoiding the artificial damping and deadlock that can arise with existing approaches while retaining their efficiency and solver independence.

\section{Method}

While our method works with any convex primitive (ex. edge, triangle, or convex meshes), without loss of generality, we introduce our method using a triangular mesh. 

Let us denote a triangular mesh as $M=\{\mathcal{V}, \mathcal{E}, \mathcal{T}\}$, where $\mathcal{V}, \mathcal{E}$, and $\mathcal{T}$ are the sets of its vertices, edges, and triangles, respectively. We collectively refer to these elements as faces.
Given a penetration-free state, $X\in\mathbb{R}^{3\times N }$, which is the stacked coordinates of all vertices, a deformation can be represented by $\Delta X\in\mathbb{R}^{3\times N }$, such that the deformed configuration is $\Delta X + X$. $\Delta X$ may come from an optimizer that minimizes an objective function defined on $M$, e.g., a simulator. While $\Delta X$ is expected to decrease the objective function, it may introduce intersections in $\Delta X + X$, which is generally undesirable. Therefore, we seek to modify $\Delta X$ while preserving its direction and magnitude as much as possible, so that it remains effective for reducing the objective. In fact, we recover the CCD-aware line search strategy used in IPC \cite{li2020incremental} if we use a single scalar $t\in [0, 1]$ to scale $\Delta X$.

\subsection{Cage the Beast}
We propose an alternative scheme that offers greater flexibility in modifying $\Delta X$. The key idea is to allocate an \textit{exclusive region} to each primitive $a\in M$ such that it can deform only within its assigned region. Given this assumption, the resulting deformation would then be guaranteed to remain penetration-free. Conceptually, this is akin to caging each primitive so that it cannot “escape” and intersect others. Our scheme consists of two steps: (i) divide the ambient space into mutually exclusive regions and assign each region to a primitive; and (ii) truncate the deformation so that each primitive’s deformation remains within its dedicated region. We therefore call the method Divide and Truncate (DAT).

This principle can be applied to various primitive types such as splines, triangles, tetrahedra, or even rigidbodies. With triangle meshes in particular, we only need to ensure that there is no vertex-triangle penetration or edge-edge penetration throughout the deformation. Accordingly, the divide-and-truncate procedure is applied separately to all vertex–triangle pairs and all edge–edge pairs. 

We begin in \autoref{sec:iso_div} by introducing \emph{Isotropic Divide and Truncate}, which is an existing form of DAT analogous to that used in \cite{wu2020safe} and \add{Offset Geometric Contact (}OGC\add{)}~\cite{chen2025offset}. We then introduce the full version of our \emph{planar DAT} in \autoref{sec:pdat} before addressing its extension into inversion prevention (\autoref{sec:inversion}), rotational degrees of freedom (\autoref{sec:rotational}), and animated degrees of freedom (\autoref{sec:animated}). 

\subsection{Isotropic Division}
\label{sec:iso_div}

One option for dividing the space is to make sure that a face never moves more than half of the distance to the closest face it might penetrate; then, even in the worst case where two faces move directly toward each other, penetration cannot happen. Let us start by considering the example of a vertex-triangle pair and denote a vertex $v$'s distance to its closest non-neighbor triangle by $d_{\text{min},v}$. We can define a vertex $v$'s exclusive region as the ball centered at $v$ with a radius $r_v < 0.5\;d_{\text{min},v}$ :
\begin{equation}
    \label{Eq:vertex_isometric_exclusive_space}
    \mathcal{S}_v = \{\vec{x}\in\mathbb{R}^3 \mid |\vec{x} - \vec{x}(v)| < r_v\}   
\end{equation} 
Similarly, denote a triangle $t$'s distance to its closest non-member vertex by $d_{\text{min},t}$. We can obtain its exclusive region by offsetting it with a radius $r_t < 0.5\; d_{\text{min},t}$: 
\begin{equation}
    \label{Eq:triangle_isometric_exclusive_space}
    \mathcal{S}_t = \{\vec{x}\in\mathbb{R}^3\mid  dis(\vec{x}, t) < r_t\}   
\end{equation} 

We name this way of dividing space as Isotropic Division, because it expands each face uniformly in each direction---treating all directions equally. Similarly, the exclusive region for each edge is:
\begin{equation}
    \label{Eq:edge_isometric_exclusive_space}
    \mathcal{S}_e = \{\vec{x}\in\mathbb{R}^3\mid  dis(\vec{x}, e) < r_e\}   
\end{equation} 

By construction, $S_v$ does not overlap with any non-neighbor triangle’s region, and $S_e$ does not overlap with any non-neighbor edge’s region, completing the division step. Truncation can then be performed by assigning each vertex an individual scaling parameter $t_v\in[0,1]$ such that:
\begin{equation}
    |t_v \Delta \vec{x}_v\mid  \leq \min(r_v, \min_{t\in \mathcal{T}_v}(r_t),  \min_{\mdel{t}\madd{e}\in \mathcal{E}_v}(r_e) ),
    \label{Eq:conservativeBound}
\end{equation}
\add{where $\mathcal{T}_v$ and $\mathcal{E}_v$ denote the incident triangles and edges of $v$ (see \autoref{thm:vertex_level_eq}).}

This formulation gives rise to the penetration-free scheme employed by \cite{wu2020safe} and OGC~\cite{chen2025offset}. We refer to this method as \textit{Isotropic Divide and Truncate}, or Isotropic-DAT.
\begin{figure}[t]
\centering
\newcommand{\subfigvtdatt}[3]{%
\begin{subfigure}{0.42\linewidth}
\centering
\fbox{\includegraphics[width=0.9\linewidth,trim=#2,clip]{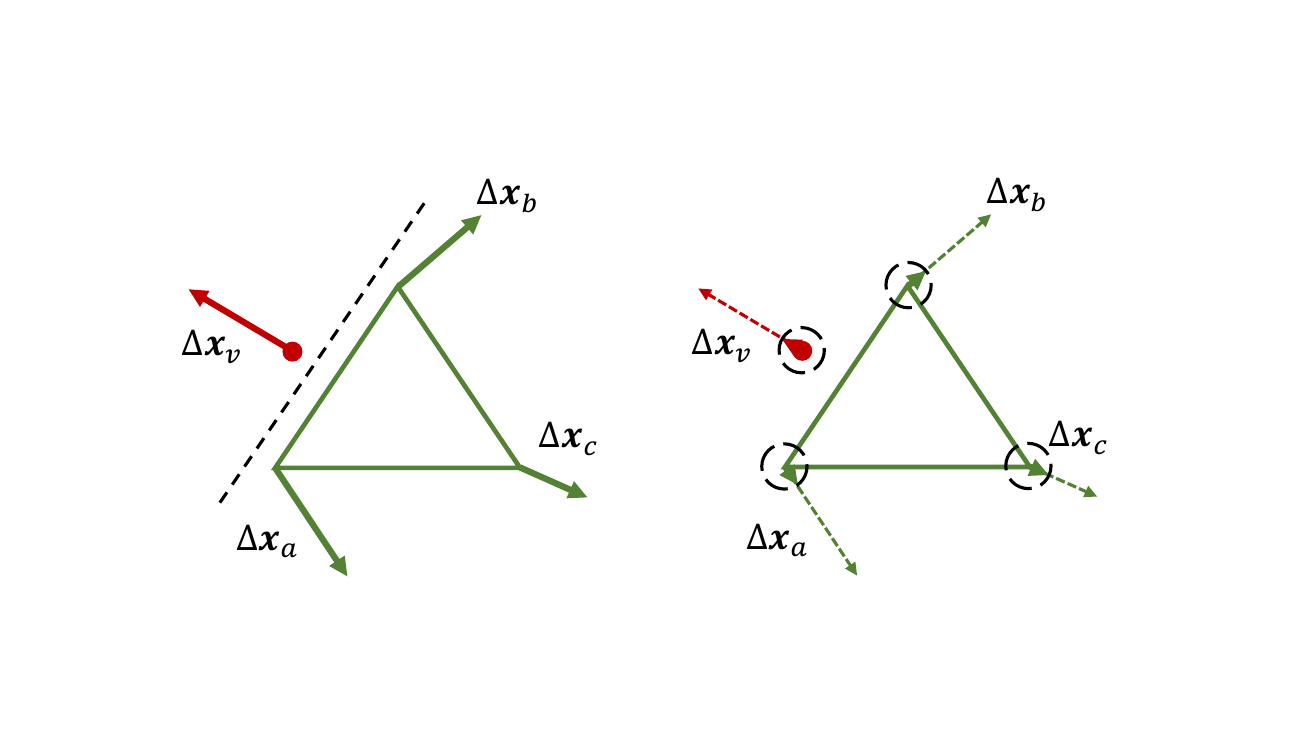}}
\caption{#3}
\label{fig:dat_comparison:#1}
\end{subfigure}}

\subfigvtdatt{a}{50 75 300 75}{Planar Truncation}%
\hspace{1.5em}%
\subfigvtdatt{b}{300 75 50 75}{Isotropic Truncation}%
\\[5pt]
\caption{Displacement truncation for a vertex-triangle pair. Vertex $v$ and triangle $t=(a, b, c)$ are close but moving away from each other. (a) Planar truncation preserves motion completely for this scenario. (b) Isotropic truncation causes severe damping.}
\label{fig:dat_comparison}
\end{figure}

\subsection{Planar Division}
\label{sec:pdat}

Isotropic Division is effective in preventing penetration, but it is inherently conservative: by ignoring the displacement direction, it must guard against the worst case where all nearby primitives move directly toward each other.
This conservatism causes two problems in simulation. First, it introduces artificial damping (\autoref{fig:clothTwistComparison}), since displacements are truncated even when primitives are separating or sliding past each other. Second, it can cause deadlocks (\autoref{fig:treadmill}) when primitives are very close---their available motion shrinks to near zero regardless of direction, as illustrated by \autoref{fig:dat_comparison:b}.

However, we typically know $\Delta X$ before we apply division and truncation. By incorporating $\Delta X$, we design a less conservative \textit{Planar Division} that guarantees penetration-free deformation while better preserving the direction and magnitude of $\Delta X$.

Consider a vertex--triangle pair $(v,t)$ in an intersection-free configuration. Let $\vec{c}_{v,t}$ denote the closest point on $t$ to $v$, and define the normal vector $\vec{n}_{v,t} = \text{normalize}(\vec{x}_v - \vec{c}_{v,t})$. It can be shown that for any $\lambda_{v,t} \in (0,1)$, the plane passing through the point $\vec{p}_{v,t} = \vec{x}_v \mdel{+}\madd{-} \lambda_{v,t}(\vec{x}_v - \vec{c}_{v,t})$ with normal $\vec{n}_{v,t}$ will not intersect either $v$ or $t$.

For this particular pair, we thus define their exclusive regions as follows:
\begin{equation}
    \label{Eq:vertex_planar_exclusive_space_single}
    \mathcal{S}_v^{v,t} = \{\vec{x}\in\mathbb{R}^3 \mid (\vec{x} - \vec{p}_{v,t}) \cdot \vec{n}_{v,t} > 0\}
\end{equation}
\begin{equation}
    \label{Eq:triangle_planar_exclusive_space_single}
    \mathcal{S}_t^{v,t} = \{\vec{x}\in\mathbb{R}^3 \mid (\vec{x} - \vec{p}_{v,t}) \cdot \vec{n}_{v,t} < 0\}
\end{equation}

The \del{two faces}\add{vertex and the triangle} each occupies an (open) half space. One can immediately see the difference from the isotropic division method: the union of the exclusive regions spans the entire real space. In comparison, the isotropic division method only occupies a small, finite part of the real space. Intuitively, this is a great improvement in space utilization.
Over all vertex–triangle pairs, the exclusive regions become intersections of half-spaces:
\begin{equation}
    \label{Eq:vertex_planar_exclusive_space}
    \mathcal{S}_v = \bigcap_{t\in \mathcal{T}}  \mathcal{S}_v^{v,t}=\{\vec{x}\in\mathbb{R}^3\mid (\vec{x} - \vec{p}_{v,t}) \cdot \vec{n}_{v,t} < 0, \forall t \in \mathcal{T}\}
\end{equation} 
\begin{equation}
    \label{Eq:triangle_planar_exclusive_space}
    \mathcal{S}_t = \bigcap_{v\in \mathcal{V}}  \mathcal{S}_t^{v,t} =\{\vec{x}\in\mathbb{R}^3\mid (\vec{x} - \vec{p}_{v,t}) \cdot \vec{n}_{v,t} > 0, \forall t \in \mathcal{T}\}
\end{equation} 
which are mutually exclusive by construction.

This idea extends to every convex shape pairs. As long as they do not intersect, the division plane can be found by finding the closest point to each other. For example, the division plane for an edge-edge pair will be the plane passing through the point $\vec{p}_{e,e'} = \vec{c}_{e,e'} + \mdel{\lambda_{v,t}}\madd{\lambda_{e,e'}}(\vec{c}_{e',e} - \vec{c}_{e,e'})$ with normal $\mdel{\vec{n}_{e,e}}\madd{\vec{n}_{e,e'}}= \text{normalize}(\vec{c}_{e',e} - \vec{c}_{e,e'})$, where $\vec{c}_{e,e'}$ is the closest point on $e$ to $e'$.

Now that we have designated the exclusive regions, the next step is to choose an appropriate $\lambda_{v,t}$ for each pair. While a naive option is $\lambda_{v,t}=0.5$, we instead adapt $\lambda_{v,t}$ based on $\Delta X$, so that the available room on each side is proportional to the normal components of the displacements. Let:
\begin{align}
\label{Eq:vertex_triangle_deltas}
    \delta_v &= \max(-\Delta\vec{x}_v \cdot \vec{n}_{v,t}, 0) \\
    \delta_t &= \max(\Delta\vec{x}_a \cdot \vec{n}_{v,t}, \Delta\vec{x}_b \cdot \vec{n}_{v,t}, \Delta\vec{x}_c \cdot \vec{n}_{v,t}, 0) 
\end{align}
where $\Delta\vec{x}_v, \Delta\vec{x}_a, \Delta\vec{x}_b, \Delta\vec{x}_c$ are the displacements of $v$ and the three vertices of $t$ respectively.
We then compute $\lambda_{v,t}$ as:
\begin{equation}
    \label{Eq:triangle_planar_lambda}
    \lambda_{v,t} = 
    \begin{cases}
      0.5 & \text{if $\delta_v=\delta_t=0$} \\
      \frac{\delta_t}{\delta_t + \delta_v} & \text{otherwise}
    \end{cases}
\end{equation} 
If both sides are moving towards the division plane, the plane's placement is proportional to the maximum normal component of the displacements on each side, see \autoref{fig:vtdat:b}. 
When either side is moving away from (or parallel to) the plane, no truncation is imposed for that side, see \autoref{fig:vtdat}cd. 
This idea extends to other convex shape pairs as well, for example, for an edge-edge pair we have:
\begin{equation}
    \label{Eq:edge_planar_lambda}
    \lambda_{e,e'} = 
    \begin{cases}
      0.5 & \text{if $\delta_e=\delta_{e'}=0$} \\
      \frac{\delta_{e'}}{\delta_e + \delta_{e'}} & \text{otherwise}
    \end{cases}
\end{equation} 

\begin{figure}[ht]
\centering
\newcommand{\subfigvtdat}[3]{%
\begin{subfigure}{0.42\linewidth}
\centering
\fbox{\includegraphics[width=0.9\linewidth,trim=#2,clip]{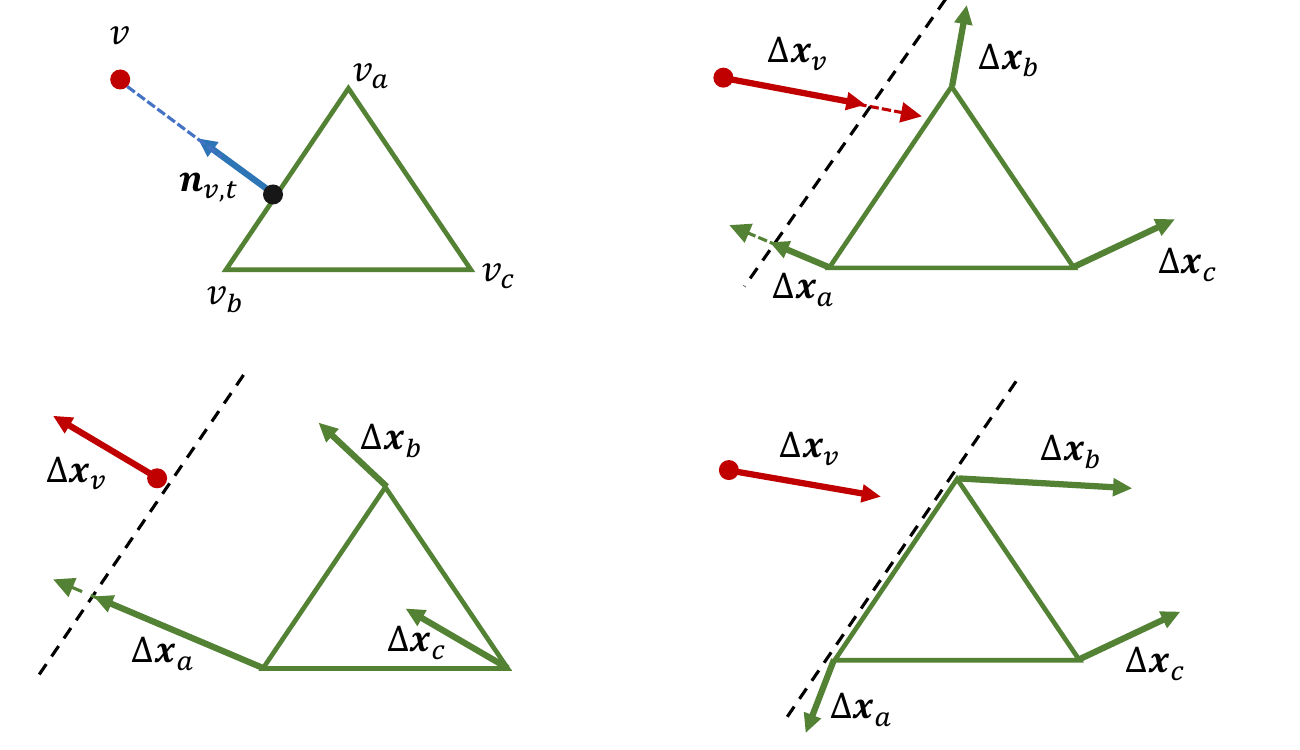}}
\caption{#3}
\label{fig:vtdat:#1}
\end{subfigure}}

\subfigvtdat{a}{7 207 347 0}{Geometry setup}%
\hspace{1em}%
\subfigvtdat{b}{340 207 20 0}{Approaching}\\[5pt]
\subfigvtdat{c}{7 0 347 160}{Vertex Moving Away}%
\hspace{1em}%
\subfigvtdat{d}{340 0 20 160}{Triangle Moving Away}
\caption{Planar division for a vertex--triangle pair. (a)~Vertex $v$ and triangle $t=(v_a, v_b, v_c)$ with normal $\vec{n}_{v,t}$ pointing from the closest point on $t$ to $v$. (b)--(d)~Different displacement scenarios: the dashed line shows the adaptive division plane. When $v$ moves toward $t$ (b), the plane shifts toward $v$; when separating (c) or moving tangentially (d), the plane shifts away, allowing freer motion.}
\label{fig:vtdat}
\end{figure}

Now we have finished the division step. The next step will be translating face-level constraints into vertex-level constraints. In particular, a deformation remains within the exclusive regions if and only if each displaced vertex satisfies the corresponding half-space constraints induced by its incident primitives:
\begin{theorem}
    \label{thm:vertex_level_eq}
    Each face's deformation stays in its exclusive region if and only if for every vertex \(v\in \mathcal{V}\):
    \begin{align}
    \label{Eq:vertex_displacement_constraint}
    \Delta\vec{x}_v + \vec{x}_v \in S_v, \\
    \Delta\vec{x}_v + \vec{x}_v \in S_t, \forall t \in \mathcal{T}_v, \\
    \Delta\vec{x}_v + \vec{x}_v \in S_e, \forall e \in \mathcal{E}_v 
\end{align}
\end{theorem}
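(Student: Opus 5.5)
The plan is to use one fact: each exclusive region $S_f$ is convex. In the planar case $S_f$ is an intersection of open half-spaces, and in the isotropic case it is a ball or an offset of a convex segment or triangle. We read ``a face's deformation stays in its exclusive region'' in the natural way for a linearly interpolated primitive: every point of the deformed face lies in $S_f$. Since the deformed face is the convex hull of its displaced vertices, the statement becomes a statement about convex hulls. The proof then has two directions, and both reduce to vertex membership.

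For the forward direction, suppose every deformed face lies in its region. A vertex $v$ is a $0$-dimensional face, so $\vec{x}_v+\Delta\vec{x}_v\in S_v$. Each triangle $t\in\mathcal{T}_v$ contains $v$ as a corner, so its deformed image $\mathrm{conv}\{\vec{x}_a+\Delta\vec{x}_a,\dots\}$ contains $\vec{x}_v+\Delta\vec{x}_v$, and hence $\vec{x}_v+\Delta\vec{x}_v\in S_t$. The same argument handles each $e\in\mathcal{E}_v$. This gives exactly the three listed conditions for every vertex.

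For the converse, fix a face $f$: a vertex, an edge $e$, or a triangle $t$. Every corner $w$ of $f$ has $f\in\mathcal{T}_w$, respectively $f\in\mathcal{E}_w$, or $f=w$. So the hypothesis applied at each corner $w$ gives $\vec{x}_w+\Delta\vec{x}_w\in S_f$. Because $S_f$ is convex, the convex hull of these displaced corners, which is the deformed face, lies in $S_f$. If one also wants the whole trajectory $\vec{x}+s\Delta\vec{x}$ for $s\in[0,1]$ to stay inside, the same convexity argument applies. The undeformed face lies in $S_f$ by construction of the division (the plane separates $v$ from $t$ when $\lambda\in(0,1)$, and the radii are positive). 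Each point of the intermediate face is a convex combination of corner positions at times $0$ and $1$, so it also lies in $S_f$.

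The main obstacle is justifying convexity of $S_f$ uniformly and stating exactly what ``stays in its region'' means. For planar division this is immediate, since $S_f$ is an intersection of half-spaces. For isotropic division, $\{\vec{x}: dis(\vec{x},f)<r\}$ is convex because $dis(\cdot,f)$ is a convex function whenever $f$ is a convex set, and I would cite this standard fact. I would also note that the theorem depends on this convexity. With it, ``each face stays in its region'' is equivalent to the finite set of vertex-level half-space or ball constraints above, and truncating each $\Delta\vec{x}_v$ individually, as in \eqref{Eq:conservativeBound} for the isotropic case, is enough.
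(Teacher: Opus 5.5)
Your proposal is correct and follows essentially the same argument as the paper: the exclusive regions are convex, the forward direction holds because each displaced vertex is a point of every incident deformed face, and the converse holds because each deformed face is the convex hull of its displaced corners. Your additions go slightly beyond the paper's proof but are sound. These are the convexity of the isotropic offset regions via convexity of the distance function, and the extension to the linearly interpolated trajectory for $s\in[0,1]$, which uses the fact that the undeformed face lies in its open region.
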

where $\mathcal{T}_v$ and $\mathcal{E}_v$ respectively denote the one-ring triangles and edges incident to $v$.

Finally, we enforce these constraints by truncating $\Delta X$ so that it complies with those constraints. A straightforward method would be to use a separate parameter $t_v \in [0,1]$ for scaling the displacement of each vertex. 
Specifically, for each division plane defined by $\vec{n}$ and $\vec{p}$
we solve its intersection point with the ray $(\vec{x}_v, \Delta \vec{x}_v)$: $\vec{p}_i=\vec{x}_v + t_i \Delta \vec{x}_v$.
We use a conservative relation  $\gamma_r\in(0,1)$ to make sure the displacement stops before the division plane:
\begin{equation}
    \label{Eq:Conservative_truncation}
        t_v = 
    \begin{cases}
      1 & \text{if $\mdel{t}\madd{t_i}<0$ or $\mdel{t}\madd{t_i}\geq 1/\gamma_r$ } \\
      \gamma_r t_i & \text{otherwise}
    \end{cases}
\end{equation}
We call this algorithm \textit{Planar Divide and Truncate}, or Planar-DAT.

\subsection{Divide and Project}
While truncation is fast, it may not be energy-optimal. Since the constraints of Planar-DAT are a collection of half-spaces, the feasible region of $X$ is a polytope.
We can project $\Delta X$ onto the feasible region using an energy-aware metric\add{, where $\mathbf{H}$ is the energy Hessian and $\mathcal{S}$ is the feasible region defined by all half-space constraints}:
\begin{align}
\underset{ \Delta X'}{\text{arg min}} \; \| \Delta X' - \Delta X \|^2_{\mathbf{H}} \;\; \text{s.t.} \;\; \Delta X' \in \mathcal{S}\mdel{,}\madd{.}
\end{align}
We solve this using Dykstra's projection algorithm~\cite{boyle1986method}, which iteratively projects onto each half-space with closed-form updates. We call this variant \emph{Divide and Project} (DAP). While DAP can potentially converge faster  per iteration, the iterative projection incurs ${\sim}50\times$ overhead compared to truncation (\autoref{fig:convergence}), making Planar-DAT more practical for real-time applications. See \autoref{sec:dykstra} for the full DAP derivation. 

\subsection{Handling Inversion Constraints}
\label{sec:inversion}

\begin{figure}[t]
\centering
\includegraphics[width=0.8\linewidth,trim=0 250 0 80,clip]{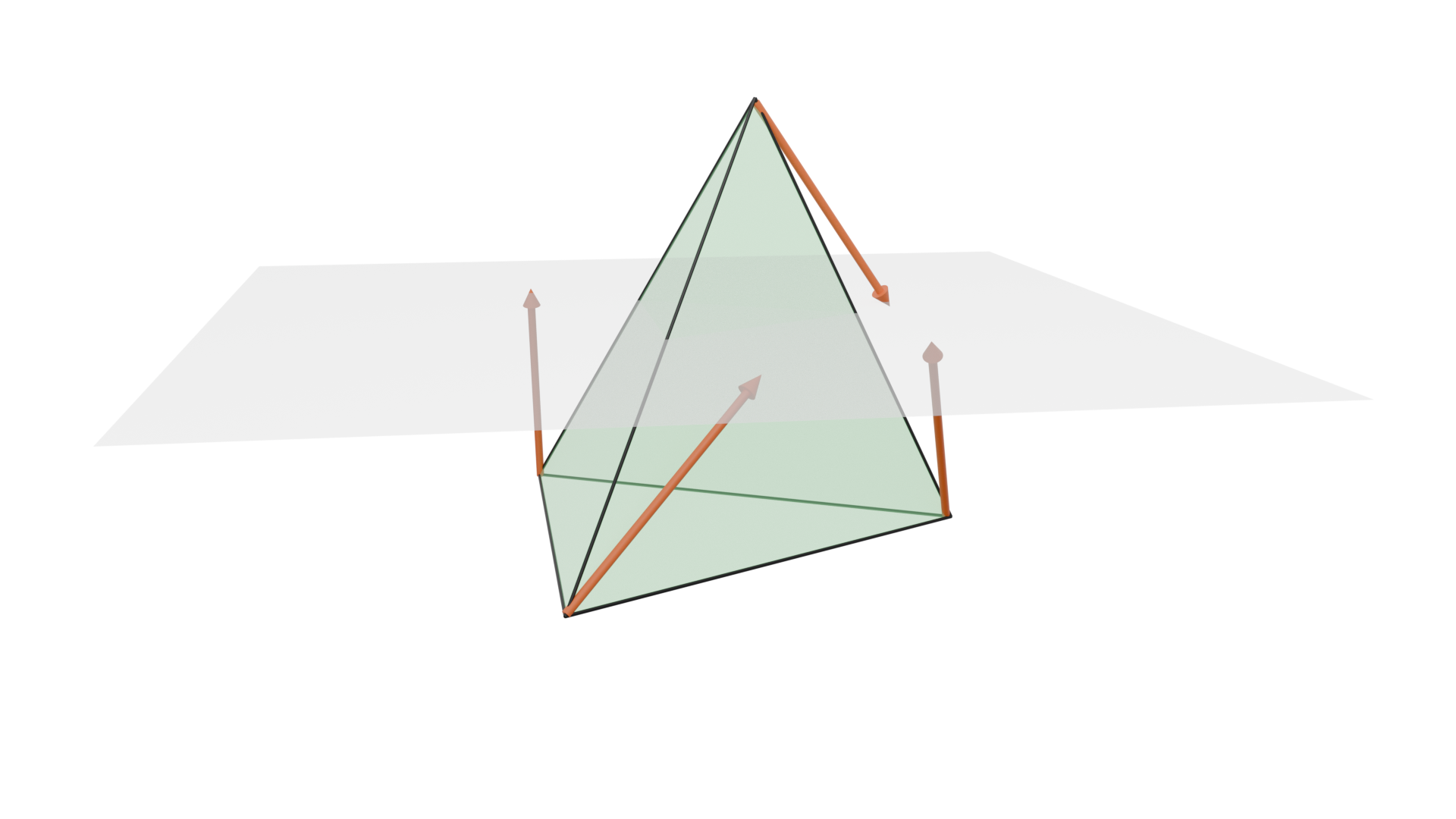}
\caption{Inversion constraint for a tetrahedron. The gray plane represents the division plane defined by one face of the tetrahedron. Each vertex must remain on its original side of the opposite face's plane to prevent inversion. Orange arrows indicate truncated vertex displacements.}
\label{fig:tetInversion}
\end{figure}

The inversion constraint, which prevents volumetric elements (such as tetrahedra) from flipping inside out, naturally integrates into the Planar-DAT framework (see \autoref{fig:tetInversion}). Specifically, for a tetrahedron, inversion occurs when one vertex crosses the plane defined by the opposite face (the other three vertices). Namely, this can be cast as a half-space exclusion condition: the vertex must remain on the same side of the face plane as it was in the rest configuration.
Importantly, the division plane for inversion is computed in exactly the same way as for a vertex–triangle contact. Therefore, the very same procedure for constructing planar exclusive regions and computing truncation can be directly reused for inversion constraints as for penetration prevention, with no additional handling required. This unifies penetration-free and inversion-free guarantees in a single Planar-DAT formulation.

\subsection{Handling Rotational Degrees of Freedom}
\label{sec:rotational}

So far, we have assumed that each vertex follows a linear trajectory during deformation. However, when simulating rigid bodies, vertices trace \textit{curved trajectories} due to rotational motion. A naive application of DAT to rigid bodies—treating vertex displacements as linear—can miss potential collisions that occur along the curved path, compromising the penetration-free guarantee.

\paragraph{Curved Trajectories.}
Following~\cite{ferguson2021intersection}, we model the trajectory of a vertex $i$ in a rigid body as it transforms from configuration $(\vec{\theta}_i, \vec{x}_i)$ with a transformation $(\Delta\vec{\theta}_i, \Delta\vec{x}_i)$:
\begin{equation}
    \label{Eq:curved_trajectory}
    \phi_i(t) = \mathcal{R}(\vec{\theta}_i + t\Delta\vec{\theta}) \hat{\vec{x}}_i + \vec{x}_i + t\Delta\vec{x}_i,
\end{equation}
where $\mathcal{R}(\vec{\theta})$ denotes the rotation matrix parameterized by the rotation vector $\vec{\theta}$ via Rodrigues' formula.
The rotation vectors and translations are linearly interpolated, but the resulting vertex trajectory $\phi_i(t)$ is nonlinear due to the rotation matrix $\mathcal{R}(\vec{\theta}_i(t))$.

\paragraph{Adapting DAT for Curved Trajectories}
We show that \autoref{thm:vertex_level_eq} extends naturally to rotational DoFs:
\begin{theorem}
    \label{thm:rigid_face_plane}
    Suppose a face (triangle or edge) undergoes rigid motion, starting entirely within a half-space bounded by a plane. The face's trajectory intersects the plane if and only if at least one of its vertices crosses the plane.
\end{theorem}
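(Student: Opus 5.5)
The key observation is that at every time $t$ the face $f(t)$ is the image of the rest face under a rigid map $\vec{y}\mapsto \mathcal{R}(\vec{\theta}+t\Delta\vec{\theta})\vec{y}+\vec{x}+t\Delta\vec{x}$. An affine map sends convex hulls to convex hulls, so $f(t)$ is exactly the convex hull of its moved vertices $\phi_i(t)$. The curvature of the trajectories in $t$ therefore never matters: the argument is carried out pointwise in $t$ and then combined with continuity. Write the plane as $\{\vec{x}\mid g(\vec{x})=0\}$ with $g(\vec{x})=(\vec{x}-\vec{p})\cdot\vec{n}$ affine, and assume without loss of generality that the face starts in $\{g>0\}$.

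For the ``if'' direction, suppose some vertex crosses the plane, i.e. $g(\phi_i(t^*))\le 0$ for some $t^*\in[0,1]$. The vertex belongs to the face at every time, and $g(\phi_i(0))>0$. The map $t\mapsto g(\phi_i(t))$ is continuous, being the composition of the Rodrigues map, a linear interpolation and an affine function. By the intermediate value theorem there is a $\tau\in(0,t^*]$ with $\phi_i(\tau)$ on the plane, so the face's trajectory meets the plane.

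For the ``only if'' direction, I argue the contrapositive. Suppose every vertex satisfies $g(\phi_i(t))>0$ for all $t\in[0,1]$. Fix $t$ and take any point of $f(t)$. By the convex hull observation it equals $\sum_i w_i\phi_i(t)$ with $w_i\ge 0$ and $\sum_i w_i=1$. Since $g$ is affine, $g(\sum_i w_i\phi_i(t))=\sum_i w_i\, g(\phi_i(t))>0$. Hence no point of the face lies on or beyond the plane at any time, and the trajectory does not intersect it.

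The only step needing care is the convex hull claim. It holds because every point of the face is stored in body coordinates $\hat{\vec{y}}$ as a fixed barycentric combination of the $\hat{\vec{x}}_i$. Every point $\vec{y}$ of the face therefore follows $\mathcal{R}(\cdot)\hat{\vec{y}}+\vec{x}+t\Delta\vec{x}$ with the same barycentric weights, and an affine map preserves affine combinations whose weights sum to one. This uses only that the motion is an affine map at each instant, so the same proof covers the linear-trajectory case of \autoref{thm:vertex_level_eq}. Consequently the vertex-level conditions of \autoref{thm:vertex_level_eq} remain sufficient and necessary when the vertices follow the curved paths $\phi_i(t)$, provided the per-vertex check verifies $g(\phi_i(t))>0$ along the entire curve and not just at the endpoint.
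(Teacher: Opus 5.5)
Your proposal is correct and follows essentially the same argument as the paper. The ``if'' direction holds because the vertex belongs to the face. The ``only if'' direction holds because rigid motion preserves barycentric coordinates, so the affine signed distance of any face point is a convex combination of the vertices' signed distances; your intermediate value argument and the remark that each instant is an affine map just make explicit what the paper calls ``clear.''
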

This theorem implies that we only need to ensure each vertex of a rigid body does not cross any of its associated division planes. Note that, unlike deformable meshes, these division planes arise solely from other rigid bodies, not from faces within the same body.


We find the latest safe time $t^*$ before a vertex crosses a plane via a two-stage algorithm (\autoref{alg:RigidTruncation}).
Given a plane defined by normal $\vec{n}$ and point $\vec{p}$, we uniformly sample $K$ points along $[0,1]$ and evaluate the signed distance:
\begin{equation}
    s(t_k) = \vec{n}^\top (\phi_i(t_k) - \vec{p}).
\end{equation}
We find the largest $t_k$ such that $s(t_j)$ has the same sign as $s(0)$ for all $j \leq k$. We then further apply bisection search in the interval $[t_k, t_{k+1}]$ for a certain number of iterations to further find the latest point $t^*$ before $s$ changes. The purpose of dividing the interval into $K+1$ intervals is to reduce the chance of not finding the first root in the presence of multiple roots.

However, this only guarantees that the endpoints lie on the correct side---the curved arc between $\phi_i(0)$ and $\phi_i(t^*)$ may still cross the plane.
To ensure safety, we compute a conservative bounding box of the trajectory segment $\phi_i([0, t^*])$ using the interval arithmetic method of~\cite{ferguson2021intersection}.
If this bounding box intersects the plane, we perform a bisection search within $[0, t^*]$:
we repeatedly bisect the interval and at each step check whether the trajectory bounding box over the current subinterval lies entirely on one side of the plane.
We continue this process until we find the largest $t^*$ (up to a desired accuracy) such that the bounding box of $\phi_i([0, t^*])$ remains fully on the safe side.
Note that when taking small steps, the trajectory is nearly linear and the algorithm rarely enters the second subdivision stage. In most cases, the more expensive interval arithmetic only needs to run once at the end to verify that the trajectory remains penetration-free.
After finding $t^*$, we apply the same relaxation ratio $\gamma_r$.
The detailed algorithm is given in \autoref{alg:RigidTruncation}.

\begin{algorithm}
\small
\SetKwFor{DoParallel}{parallel for each}{do}{end}
\newcommand\mycommfontB[1]{\footnotesize\textcolor[RGB]{0 128 0}{#1}}
\SetCommentSty{mycommfontB}
\LinesNumbered
\DontPrintSemicolon
\SetAlgoNoLine

\KwIn{
$\phi_i(t)$: curved trajectory of vertex $i$ (see \autoref{Eq:curved_trajectory});\\
$\vec{n}$: division plane normal;\\
$\vec{p}$: point on division plane;\\
$K$: number of uniform samples;\\
$n_{\text{bisect}}$: maximum bisection iterations;\\
$\gamma_r$: relaxation ratio
}
\KwOut{$t^*$: truncation ratio for vertex $i$}

\vspace{3pt}
\hrule
\vspace{6pt}

\tcp{Stage 1: Find latest safe time via sampling + bisection}
$s_0 \gets \vec{n}^\top(\phi_i(0) - \vec{p})$ \tcp{Initial signed distance}
$t^* \gets 1$\;
$t_\text{prev} \gets 0$\;

\For{$k \gets 1$ \KwTo $K$}{
    $t_k \gets k / K$\;
    $s_k \gets \vec{n}^\top(\phi_i(t_k) - \vec{p})$\;
    \uIf{$s_0 \cdot s_k < 0$}{
        \tcp{Sign change detected; bisect in $[t_\text{prev}, t_k]$}
        $(t_\text{lo}, t_\text{hi}) \gets (t_\text{prev}, t_k)$\;
        \For{$j \gets 1$ \KwTo $n_{\text{bisect}}$}{
            $t_\text{mid} \gets (t_\text{lo} + t_\text{hi}) / 2$\;
            $s_\text{mid} \gets \vec{n}^\top(\phi_i(t_\text{mid}) - \vec{p})$\;
            \uIf{$s_0 \cdot s_\text{mid} > 0$}{
                $t_\text{lo} \gets t_\text{mid}$ \tcp{Same side as start}
            }
            \Else{
                $t_\text{hi} \gets t_\text{mid}$ \tcp{Crossed to other side}
            }
        }
        $t^* \gets t_\text{lo}$\;
        \textbf{break}\;
    }
    $t_\text{prev} \gets t_k$\;
}

\tcp{Stage 2: Verify trajectory arc using interval arithmetic}
\uIf{$\texttt{boundingBoxIntersectsPlane}(\phi_i, [0, t^*], \vec{n}, \vec{p})$}{
    \tcp{Trajectory bounding box crosses plane; bisect in $[0, t^*]$}
    $(t_\text{lo}, t_\text{hi}) \gets (0, t^*)$\;
    \For{$j \gets 1$ \KwTo $n_{\text{bisect}}$}{
        $t_\text{mid} \gets (t_\text{lo} + t_\text{hi}) / 2$\;
        \uIf{$\texttt{boundingBoxIntersectsPlane}(\phi_i, [0, t_\text{mid}], \vec{n}, \vec{p})$}{
            $t_\text{hi} \gets t_\text{mid}$ \tcp{Shrink upper bound}
        }
        \Else{
            $t_\text{lo} \gets t_\text{mid}$ \tcp{Safe; expand lower bound}
        }
    }
    $t^* \gets t_\text{lo}$\;
}

$t^* \gets \gamma_r \cdot t^*$ \tcp{Apply relaxation ratio}

\Return{$t^*$}\;
\caption{Truncation Ratio for Curved Trajectory}
\label{alg:RigidTruncation}
\end{algorithm}

For the entire rigid body, the truncation ratio is the minimum across all its vertices and all division planes:
\begin{align}
    t_b = \min_{v \in b} \min_{(\vec{n}, \vec{p}) \in \mathcal{P}_v} t^*_{v, \vec{n}, \vec{p}},
\end{align}
where $\mathcal{P}_v$ denotes the set of division planes associated with vertex $v$. The rigid body's transformation is then scaled uniformly: $\Delta\vec{\theta}_b \gets t_b \cdot \Delta\vec{\theta}_b$ and $\Delta\vec{x}_b \gets t_b \cdot \Delta\vec{x}_b$.
Additionally, we bound the maximum displacement to $0.5\gamma_r r_q$ using a similar bisection algorithm to handle faces beyond the query range, for which the method degenerates to isotropic-DAT.

\subsection{Handling Animated Degrees of Freedom}
\label{sec:animated}

Animated objects are common in simulation---their motion is prescribed externally (via animation data) rather than by the physical solver. Since collisions with animated objects must be handled as robustly as with simulated ones, we treat the animated deformation just like optimizer-provided deformation---it enters the DAT framework in the same way.

Rather than applying the full animated step at once, we incrementally apply the animated deformation over multiple iterations, subject to the same truncation and safety checks as optimizer updates. Conceptually, the animation behaves as if driven by an infinitely stiff force toward its target pose---it will eventually overcome any finitely stiff resistance and reach its destination. This ensures penetration-free interaction and seamless integration between animated and simulated objects.
Practically, we iterate until the full animation is applied or a maximum iteration count is reached. If the animation cannot fully complete within this limit, the object continues from its current shape in the next time step, moving toward the updated animation target.

\section{Algorithm}
We propose an efficient parallel implementation of Planar-DAT, see \autoref{Alg:PlanarDAT}.
The preceding theoretical analysis defines each face's exclusive region by iterating over all other faces, which would result in a prohibitive O($N^2$) complexity in practice. To make the algorithm more practical, we only consider faces that are within a certain radius $r_q$. This can be achieved by querying neighboring faces within $r_q$ using a spatial acceleration data structure, such as a BVH (Bounding Volume Hierarchy). When the objective function of the optimizer includes collision energy, this neighborhood information is typically already available for reuse. The downside of restricting the neighborhood search to a certain range $r_q$ is that faces farther than $r_q$ are ignored, and their geometry and displacements are not considered. For these faces, the method degenerates to isotropic-DAT to prevent penetration, i.e., the maximum displacement of each vertex is up-bounded by $0.5\gamma_r r_q$. For rigid bodies, we similarly use bisection and interval arithmetic to ensure that the curved trajectory of each vertex remains within a sphere of radius $0.5\gamma_r r_q$.

We parallelize the implementation over collision pairs: the vertex-triangle pairs and edge-edge pairs returned by the collision queries described above. We allocate a thread for each pair; the thread computes the corresponding division plane and the truncation scalar $t_v$ for the four vertices involved in the pair.  We first initialize all $t_v$ with 1, then let each thread update it using atomic min operations. We found this to be the most efficient implementation, about $20\times$ faster than a naive parallel implementation of parallelizing by each vertex and iterating over all its adjacent faces and their collision pairs. It is even $2\times$ faster than a block-based implementation on GPU, where we launch a thread block for each vertex which processes all the adjacent faces in parallel and computes the minimum $t_v$ using a local reduction min. We attribute the advantage to the fact that the atomic min quickly reduces $t_v$ and most of the threads will not write to the global memory at all.
We also present an algorithm for integrating DAT into a simulation framework to enable penetration-free simulation, see \autoref{alg:SimulationPipeline}.

\begin{algorithm}
\small
\SetKwFor{Foreach}{for each}{do}{end}
\SetKwFor{DoParallel}{parallel for each}{do}{end}
\newcommand\mycommfont[1]{\footnotesize\textcolor[RGB]{0 128 0}{#1}}
\SetCommentSty{mycommfont}
\LinesNumbered
\DontPrintSemicolon
\SetAlgoNoLine

\KwIn{
$X \in \mathbb{R}^{3\times N}, \theta \in \mathbb{R}^{L}$: base linear and rotational states;\\
$\Delta X \in \mathbb{R}^{3\times N}, \Delta \theta \in \mathbb{R}^{L}$: the linear and rotational deformation;\\
$\mathcal{B}$: set of rigid bodies;\\

$\mathcal{F}_c\in \mathcal{V}\times\mathcal{F}, \mathcal{E}_c\in \mathcal{E}\times\mathcal{E}$: vertex-triangle and edge-edge pairs that are in contact;\\
$\gamma_r$: relaxation ratio;\\ 
$r_q$: query radius
}
\KwOut{
    $\Delta X_\mathrm{trunc} \in \mathbb{R}^{3\times N}, \Delta\theta_\mathrm{trunc} \in \mathbb{R}^L$: the truncated deformation
    
    }

\vspace{3pt}
\hrule
\vspace{6pt}

$t_v=1, \forall v \in \mathcal{V}$\;

\DoParallel{$(f_1,f_2) \in \mathcal{F}_c \cup \mathcal{E}_c$}{
    $(\vec{n}, \vec{p}) = \texttt{calculateDivisionPlane}(X, \Delta X, \Delta \theta, f_1,f_2)$ \tcp{\autoref{Eq:triangle_planar_lambda}, \autoref{Eq:edge_planar_lambda}}
    \tcp{$a,b,c,d$ are the four vertices on $f_1,f_2$}
    $(t_a^*, t_b^*, t_c^*, t_d^*) = \texttt{calculateTruncationRatio}(\vec{n}, \vec{p}, \Delta X, \Delta \theta, f_1,f_2)$\;
    $t_a \gets\texttt{atomic\_min}(t_a, t_a^*)$,
    $t_b \gets\texttt{atomic\_min}(t_b, t_b^*)$\;
    $t_c \gets\texttt{atomic\_min}(t_c, t_c^*)$,
    $t_d \gets\texttt{atomic\_min}(t_d, t_d^*)$\;
}


\DoParallel{$v \in \mathcal{V}$}{
   \(\Delta \vec{x}_v\gets \gamma_r \Delta \vec{x}_v \)\;
   \uIf{\(||\vec{x}_v|| > 0.5 \gamma_r r_q\)}{
        \(\Delta \vec{x}_v\gets \frac{0.5 \gamma_r r_q}{||\vec{x}_v|| }\Delta \vec{x}_v \)\;
   }
}

\tcp{Handle rigid body rotational DoFs (\autoref{sec:rotational})}
\DoParallel{$b \in \mathcal{B}$}{
    $t_b \gets \min_{v \in b} t_v$\;
    $t_b \gets \min(t_b, \texttt{boundTrajectory}(\phi_b, 0.5\gamma_r r_q))$\;
    $\Delta\vec{\theta}_b \gets t_b \cdot \Delta\vec{\theta}_b$,
    $\Delta\vec{x}_b \gets t_b \cdot \Delta\vec{x}_b$\;
}

\Return{$\Delta X, \Delta \theta$}\;
\caption{Planar DAT}
\label{Alg:PlanarDAT}
\end{algorithm}

\subsection{Integrating DAT into Simulation Pipeline}
We introduce how to effectively incorporate our DAT algorithm into a simulation pipeline to enable penetration-free simulation, see \autoref{alg:SimulationPipeline}.
One notable design choice is that collision detection does not need to run at every solver iteration, but can instead be executed at a user-specified frequency to save computation. To make this happen, we do not update the state after every iteration, instead we accumulate the incremental update returned by each solver iteration to $\Delta X$, and apply DAT truncation at every iteration. We only commit the accumulated deformation to the state and reset $\Delta X$ to zero when performing a new collision detection (see line 9\textasciitilde 12 in \autoref{alg:SimulationPipeline}). Note that DAT must be computed using the positions recorded before the collision detection, rather than $X+\Delta X$, because collision detections are applied on state $X$. The reason we choose to use a fixed collision detection frequency is that it allows us to capture the computation of an entire frame into a single CUDA graph.

\begin{algorithm}
\small
\SetKwFor{DoParallel}{parallel for each}{do}{end}
\newcommand\mycommfontC[1]{\footnotesize\textcolor[RGB]{0 128 0}{#1}}
\SetCommentSty{mycommfontC}
\LinesNumbered
\DontPrintSemicolon
\SetAlgoNoLine

\KwIn{
$X_\text{in}\in \mathbb{R}^{K\times3}$: stacked positions of vertices from previous step;\\
$V_\text{in}\in \mathbb{R}^{K\times3}$: stacked velocities of vertices from previous step;\\ 
$\vec{a}_{\text{ext}}$: external acceleration;\\ 
$\gamma$: a relaxiation parameter;\\ 
$M=\{\mathcal{V}, \mathcal{E}, \mathcal{T}\}$;\\
$r$: contact radius, $r_q$: query radius
}
\KwOut{$X\in \mathbb{R}^{K\times3}$: stacked positions of vertices for current step}

\vspace{3pt}
\hrule
\vspace{6pt}

$X \gets X_\text{in}$\;
$Y \gets X_\text{in} + \Delta t\,V_\text{in} + \Delta t^2 \vec{a}_{\text{ext}}$\;
$\Delta X \gets \text{initialGuess}(X_\text{in}, V_\text{in}, Y)$\;

$\mathcal{F}_c \gets \text{vertexTriangleCollisionDetection}(X, \mathcal{V}, \mathcal{T}, r_q)$\;
$\mathcal{E}_c \gets \text{edgeEdgeCollisionDetection}(X, \mathcal{E}, r_q)$\;

$\Delta X \gets \text{DAT}(X, \Delta X, \mathcal{F}_c, \mathcal{E}_c, r_q)$\;

\ForEach{$i$ \KwTo $n_\text{iter}$}{

    \uIf{requireCollisionDetection()}{
        $X \gets X + \Delta X$\;
        $\Delta X \gets \vec{0}$\;
        $\mathcal{F}_c \gets \text{vertexTriangleCollisionDetection}(X, \mathcal{V}, \mathcal{T}, r_q)$\;
        $\mathcal{E}_c \gets \text{edgeEdgeCollisionDetection}(X, \mathcal{E}, r_q)$\;
    }

    $\Delta X \gets \Delta X + \text{simulationIteration}(X, \Delta X, Y, \mathcal{F}_c, \mathcal{E}_c, M, r_q)$\;
    $\Delta X \gets \text{DAT}(X, \Delta X, \mathcal{F}_c, \mathcal{E}_c, r_q)$\;

    \tcp{Optional Convergence Evaluation}
    \uIf{$\text{evaluateConvergence}(X, \Delta X, Y, \mathcal{F}_c, \mathcal{E}_c, M, r_q)$}{
        break\;
    }
}

$X \gets X + \Delta X$\;
$V \gets (X - X_\text{in})/\Delta t$\;

\Return{$X, V$}\;
\caption{Simulation Step with DAT}
\label{alg:SimulationPipeline}
\end{algorithm}


\section{Results}
\subsection{Implementation}


We implemented both Isotropic-DAT and Planar-DAT in CUDA and integrated them with a GPU-based VBD (Vertex Block Descent) solver.
It is important to note that since VBD is a Gauss-Seidel solver, each solver iteration in \autoref{alg:SimulationPipeline} corresponds to processing a particular color, i.e., we immediately apply the DAT process before proceeding to the next color. 
We use \del{Offset Geometric Contact (}OGC\del{)} as our contact model, and a GPU-based BVH (Bounding Volume Hierarchy) for spatial queries. The entire computation pipeline resides on the GPU without requiring CPU-GPU data transfer, allowing us to capture a full simulation step in a single CUDA graph. We collect performance data on an NVIDIA RTX PRO 6000 Blackwell GPU.

OGC includes a scheme for determining when new collision detection is needed by counting vertices that have moved beyond their conservative bounds. However, this check requires CPU involvement and would alter the structure of the CUDA graph. For simplicity, we run collision detection at a fixed frequency of once every 5 solver iterations. In most experiments, we use a contact radius $r_c$ of approximately 0.2mm and set the query radius to $r_q = 1.5\,r_c$. Unlike Isotropic-DAT, we found that Planar-DAT is less sensitive to the choice of contact and query radii. It exhibits significantly reduced damping artifacts for contacts with high relative velocity and never falls into deadlock in our tests. Performance statistics and simulation parameters can be found in \autoref{table:Performance}. The  numbers of sampling points for rotational DoFs depends on the size of the time step. We use $K=8$ when the time step is 1/600 and scale it accordingly when time step changes.

Planar-DAT incurs a moderate increase in per-iteration runtime compared to Isotropic-DAT, as its direction-aware constraints require additional computation for each contact. However, Planar-DAT's robust handling of complex, high-velocity contact events allows us to safely use larger simulation time steps and significantly fewer solver iterations and collision detection per frame. More importantly, because the majority of simulation time is spent in collision detection, the extra overhead from the Planar formulation represents less than 10\% of the total cost. Consequently, Planar-DAT reduces the number of expensive collision queries and solver steps required, and in practice, overall simulation runs at least 2$\times$ faster than Isotropic-DAT for dense contact-rich scenarios.

As to the implementation for DAP (Divide and Project), we use a per-vertex projection scheme, as oppose to projecting the entire $X$ over all the half space constraints. Since each projection require a linear solve, and a simulation scene can have millions of collision, a global DAP algorithm is prohibitively expensive. In fact in our test we are never able to finish even one iteration. Instead, we project each vertex separately, onto its own polytope constraints. We use the vertex block of energy hessian matrix we obtained from the VBD solve as the metric.

\begin{figure}[t]
    \centering
    \includegraphics[width=0.8\linewidth]{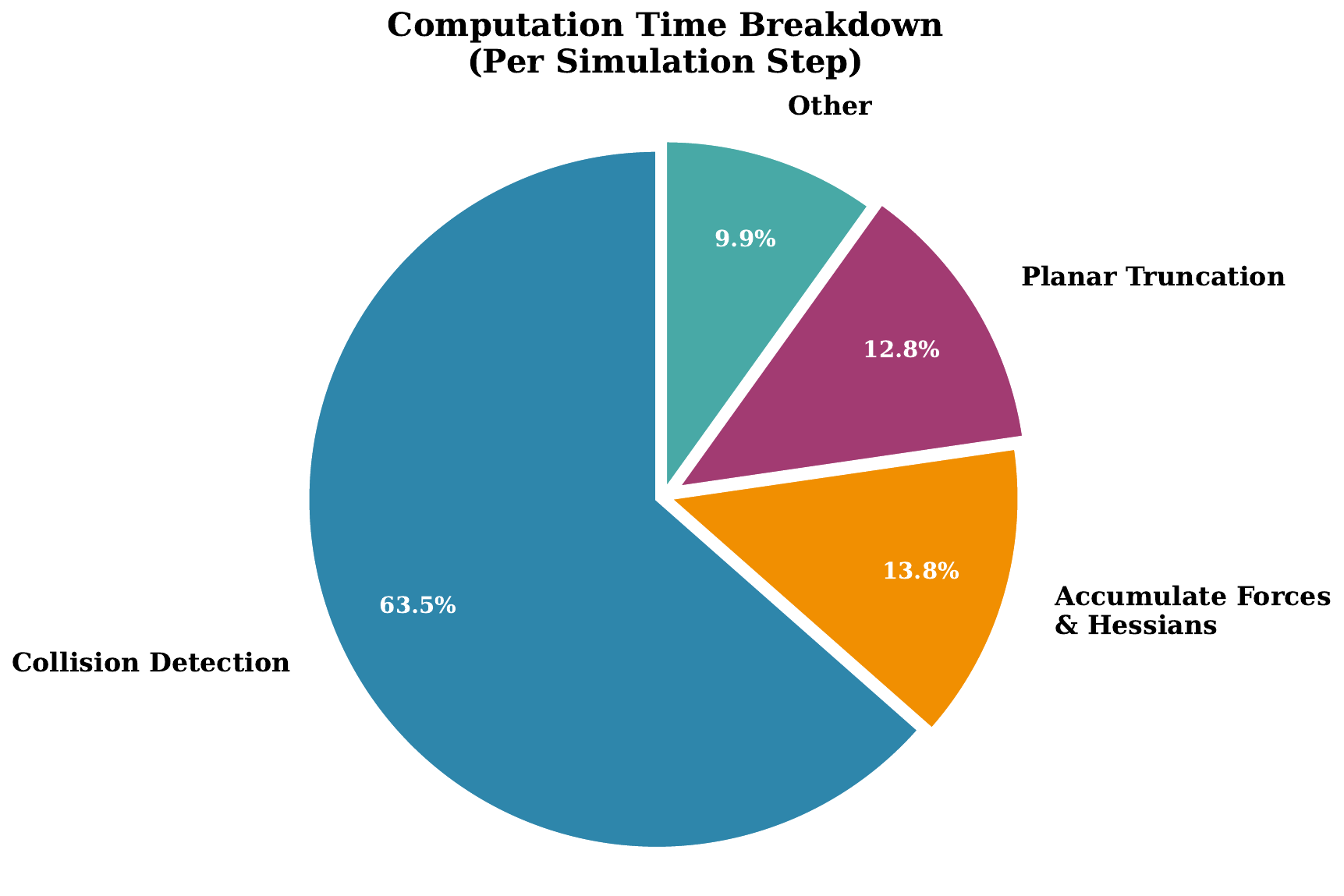}
    \caption{Computation time breakdown per simulation step on the cloth twist scenario (\autoref{fig:clothTwistComparison}). Collision detection dominates at 63.5\%, while Planar Truncation accounts for only 12.8\% of the total cost.}
    \label{fig:computation_time}
\end{figure}

\autoref{fig:computation_time} shows the computation time breakdown for Planar-DAT. Collision detection dominates at 63.5\% of total time, followed by accumulating forces and Hessians (13.8\%), Planar Truncation (12.8\%), and other operations (9.9\%). This breakdown highlights a key advantage of our approach: since collision detection is the bottleneck, the additional cost of direction-aware truncation is modest relative to the overall simulation cost. Moreover, Planar-DAT's ability to take larger stable steps often reduces the number of collision detection calls required, yielding net performance gains despite the per-step overhead.

\begin{table*}[t]
    \caption{Performance results and simulation parameters for all experiments. The query radius is $r_q = 1.5 r_c$. \add{Effective FPS is computed for Planar-DAT assuming a 60\,fps rendering target: $\text{FPS} = 1000\,/\,(\text{substeps} \times \text{avg.\ time/step})$, where $\text{substeps} = \lfloor(1/60)\,/\,\Delta t\rfloor$.}}
    \label{table:Performance}
    \resizebox{\linewidth}{!}{%
    \begin{tabular}{|l||ll|ll|lll|ll|ll|l|}
    \hline
    Experiment Name                                    & \multicolumn{2}{l|}{Number of}          & \multicolumn{2}{l|}{Stiffness (Tets / Triangles)}           & \multicolumn{3}{l|}{Contact}                                  & \multicolumn{2}{l|}{Simulation}          & \multicolumn{2}{l|}{Time per step (avg./max, ms)} & \add{Eff.} \\
                                                       & \multicolumn{1}{l|}{Vert.} & Tets / Triangles & \multicolumn{1}{l|}{$\lambda$} & $\mu$   & \multicolumn{1}{l|}{$k_c$} & \multicolumn{1}{l|}{$\mu_f$} & $r_c$ (mm) & \multicolumn{1}{l|}{Time Step} & Iter.    & \multicolumn{1}{l|}{Isotr-DAT} & Planar-DAT & \add{FPS} \\ \hline
    200 Layers of Cloth (\autoref{fig:multilayerClothes200}) & \multicolumn{1}{l|}{4M}  & 7.84M     & \multicolumn{1}{l|}{1e4}       & 1e4     & \multicolumn{1}{l|}{5e4}   & \multicolumn{1}{l|}{0.1}     & 2.5       & \multicolumn{1}{l|}{1/600}     & 10       & \multicolumn{1}{l|}{NA}    &     12.6/18.1  & \add{8}     \\
    Bullet (\autoref{fig:bullet})                      & \multicolumn{1}{l|}{9.7K}   & 46.6K       & \multicolumn{1}{l|}{2e10}      & 5e9     & \multicolumn{1}{l|}{2e10}  & \multicolumn{1}{l|}{0.01}       & 1        & \multicolumn{1}{l|}{1.7e-5}    & 20       & \multicolumn{1}{l|}{NA}        &     \del{1.8/2.5}\add{1.0/1.4}  & \add{1.0}    \\
    Gear Crusher (\autoref{fig:crusher})               & \multicolumn{1}{l|}{15K}    & 60K         & \multicolumn{1}{l|}{1e6}       & 1e5     & \multicolumn{1}{l|}{1e6}   & \multicolumn{1}{l|}{0.2}        & 5        & \multicolumn{1}{l|}{1/600}     & 10       & \multicolumn{1}{l|}{NA}        &     \del{1.0/1.4}\add{1.1/1.6}  & \add{91}    \\
    Multi-Physics (\autoref{fig:multiphysics})         & \multicolumn{1}{l|}{10.8K} & 14.3K/12.8K  & \multicolumn{1}{l|}{5e4/5e5}       & 5e4/5e5     & \multicolumn{1}{l|}{1e5}   & \multicolumn{1}{l|}{0.1}     & 3        & \multicolumn{1}{l|}{1/600}     & 10        & \multicolumn{1}{l|}{NA}              &     \del{1.1/1.6}\add{1.8/2.5}  & \add{56}    \\
    Yarn Twist Release (\autoref{fig:yarnTwist}) & \multicolumn{1}{l|}{130K}  & 130K      & \multicolumn{1}{l|}{0.1}       & NA     & \multicolumn{1}{l|}{2e-4}   & \multicolumn{1}{l|}{0.1}     & 1.5        & \multicolumn{1}{l|}{1/2520}     & 4       & \multicolumn{1}{l|}{0.39/0.97}  &     0.45/0.90  & \add{53}     \\
    Treadmill (\autoref{fig:treadmill})                & \multicolumn{1}{l|}{4.8K}  & 9.3K      & \multicolumn{1}{l|}{1e5}       & 1e5     & \multicolumn{1}{l|}{2e5}   & \multicolumn{1}{l|}{0.1}     & 2        & \multicolumn{1}{l|}{1/600}     & 10       & \multicolumn{1}{l|}{0.5/0.8}        &     0.55/0.9  & \add{182}     \\
    Unroll Cloth (\autoref{fig:unrollCloth})           & \multicolumn{1}{l|}{3K}    & 5.7K      & \multicolumn{1}{l|}{1e5}       & 1e5     & \multicolumn{1}{l|}{1e5}   & \multicolumn{1}{l|}{0.1}     & 2        & \multicolumn{1}{l|}{1/600}     & 10       & \multicolumn{1}{l|}{0.4/0.6}        &     0.45/0.65  & \add{222}     \\
    Gift Box Drop (\autoref{fig:giftBoxDrop})          & \multicolumn{1}{l|}{1.5K}  & 72/2.4K      & \multicolumn{1}{l|}{1e5/1e5}       & 1e5/1e5     & \multicolumn{1}{l|}{1e5}   & \multicolumn{1}{l|}{1.0}     & 2        & \multicolumn{1}{l|}{1/600}      & 10       & \multicolumn{1}{l|}{0.35/0.5}        &     0.4/0.55  & \add{250}     \\
    Cloth Twist Release (\autoref{fig:clothTwistComparison}) & \multicolumn{1}{l|}{10K}  & 19.6K      & \multicolumn{1}{l|}{1e5}       & 1e5     & \multicolumn{1}{l|}{1e5}   & \multicolumn{1}{l|}{0.2}     & 2        & \multicolumn{1}{l|}{1/600}     & 10       & \multicolumn{1}{l|}{1.0/1.6}  &     1.1/1.6  & \add{91}     \\
    Oscillating Cloth (\autoref{fig:oscillatingCloth}) & \multicolumn{1}{l|}{7.5K}  & 14.5K     & \multicolumn{1}{l|}{1e5}       & 1e5     & \multicolumn{1}{l|}{1e5}   & \multicolumn{1}{l|}{0.1}     & 2        & \multicolumn{1}{l|}{1/300}     & 10       & \multicolumn{1}{l|}{0.6/1.0}        &     0.65/1.1  & \add{308}     \\ \hline
    \end{tabular}
    }
\end{table*}

\subsection{Experiments}
\subsubsection{Stress Tests}

\begin{figure*}[t]
    \centering
    \newcommand{\figclothdrop}[1]{\includegraphics[width=0.245\linewidth,trim=250 150 200 20,clip]{Figures/ClothDrop200/#1.jpg}}
    \figclothdrop{frame_000000}\hfill%
    \figclothdrop{frame_000050}\hfill%
    \figclothdrop{frame_000071}\hfill%
    \figclothdrop{frame_000200}
    \caption{Two hundred layers of cloth are dropped onto a cylinder, then slide to the ground. This simulation has 4M vertices and 7.84M triangles, constantly producing more than 20 million contact pairs. ($\Delta t=1/600$s, avg.\ step time: 12.6ms)}
    \label{fig:multilayerClothes200}
\end{figure*}

We stress-test our method by dropping \emph{200 layers of cloth} onto a cylinder (\autoref{fig:multilayerClothes200}), constantly producing more than 20 million simultaneous contact pairs as the layers interact and settle. Despite extensive parameter tuning, OGC with Isotropic-DAT fails in this scenario due to locking and performance degradation from overly restrictive truncation. In contrast, Planar-DAT avoids these issues by not unnecessarily suppressing tangential and escaping motion. Our simulation uses twice the time step and half the iterations per step compared to those reported in OGC \cite{chen2025offset} on a scene four times larger. It remains stable, penetration-free, and highly dynamic, averaging under 12.6\,ms per step on the GPU.

\begin{figure*}[t]
    \centering
    \newcommand{\figbullet}[2]{%
    \begin{tikzpicture}
    \node[anchor=south west,inner sep=0] (image) at (0,0) {\includegraphics[width=0.19\linewidth,trim=100 200 120 100,clip]{Figures/Bullet/#1.jpg}};
    \ifx\relax#2\relax\else\node[anchor=south west,fill=white,fill opacity=0.7,text opacity=1,inner sep=2pt] at (image.south west) {\scriptsize #2};\fi
    \end{tikzpicture}}
    \figbullet{frame_000000}{$t\!=\!0$}\hfill%
    \figbullet{frame_000300}{$t\!=\!0.5$ms}\hfill%
    \figbullet{frame_000600}{$t\!=\!1$ms}\hfill%
    \figbullet{frame_000900}{$t\!=\!1.5$ms}\hfill%
    \figbullet{frame_001069}{$t\!=\!1.8$ms}
    \caption{Lead bullet firing through a rifled barrel (cross-section view). A lead bullet is pushed through a rigid barrel with helical rifling grooves. The bullet deforms to engage with the rifling as it accelerates through the barrel. \add{Inset: von Mises strain colormap (blue=low, red=high) showing deformation intensity.} ($\Delta t=1.7\times10^{-5}$s, avg.\ step time: \del{1.8}\add{1.0}ms)}
    \label{fig:bullet}
\end{figure*}

\autoref{fig:bullet} showcases our method's capability to robustly handle high relative speed contact with complex geometry. In this experiment, a 9mm lead bullet (represented as a tetrahedral mesh) is propelled through a rigid barrel with helical rifling by an applied force of $2 \times 10^9$ at the base. As the bullet moves along the barrel, it deforms to conform to the rifling grooves, while our method maintains strict penetration-free contact with the barrel walls, reaching an end speed of 369m/s. This scenario highlights our framework's ability to handle frictional contacts with extremely large relative velocities.

\begin{figure*}[t]
\centering
\newcommand{\figcrusher}[2]{%
\begin{tikzpicture}
\node[anchor=south west,inner sep=0] (image) at (0,0) {\includegraphics[width=0.19\linewidth,trim=300 100 100 300,clip]{Figures/Crusher/#1.jpg}};
\ifx\relax#2\relax\else\node[anchor=south west,fill=white,fill opacity=0.7,text opacity=1,inner sep=2pt] at (image.south west) {\scriptsize #2};\fi
\end{tikzpicture}}
\figcrusher{frame_000000}{$t\!=\!0$}\hfill%
\figcrusher{frame_000060}{$t\!=\!1$}\hfill%
\figcrusher{frame_000120}{$t\!=\!2$}\hfill%
\figcrusher{frame_000180}{$t\!=\!3$}\hfill%
\figcrusher{frame_000240}{$t\!=\!4$}
\caption{Gear crusher simulation. A deformable Armadillo is fed between two counter-rotating gear-shaped crushers. As the gears rotate, the soft body is squeezed and deformed through the narrow gap, demonstrating our method's ability to handle extreme compression with complex geometry contacts while maintaining penetration-free and inversion-free constraints throughout. ($\Delta t=1/600$s, avg.\ step time: \del{1.0}\add{1.1}ms)}
\label{fig:crusher}
\end{figure*}

\autoref{fig:crusher} demonstrates our method's robustness under extreme compressive deformation with complex rigid-deformable contact. A soft body Armadillo (15K vertices, 60K tetrahedra) is fed between two counter-rotating gear-shaped crushers, whose motions are animated. As the gears rotate inward, the soft body is squeezed through the narrow gap between gear teeth, undergoing severe compression and shearing. Our Planar-DAT formulation maintains penetration-free and inversion-free constraints throughout this challenging scenario, preserving simulation stability even as the mesh undergoes extreme deformation.

\subsubsection{Multi-Physics}

\begin{figure*}[t]
\centering
\newcommand{\figmultiphys}[2]{%
\begin{tikzpicture}
\node[anchor=south west,inner sep=0] (image) at (0,0) {\includegraphics[width=0.19\linewidth,trim=100 200 100 150,clip]{Figures/Multiphysics/#1.jpg}};
\ifx\relax#2\relax\else\node[anchor=south west,fill=white,fill opacity=0.7,text opacity=1,inner sep=2pt] at (image.south west) {\scriptsize #2};\fi
\end{tikzpicture}}
\figmultiphys{frame_000000}{$t\!=\!0$}\hfill%
\figmultiphys{frame_000037}{$t\!=\!0.62$}\hfill%
\figmultiphys{frame_000075}{$t\!=\!1.25$}\hfill%
\figmultiphys{frame_000112}{$t\!=\!1.87$}\hfill%
\figmultiphys{frame_000150}{$t\!=\!2.5$}
\caption{Multi-physics simulation with coupled soft body, cloth, and rigid body falling onto a cloth surface. Our method handles the complex interplay between cloth-soft body, soft body-soft body, and soft body-rigid body contacts while maintaining penetration-free constraints throughout the simulation. ($\Delta t=1/600$s, avg.\ step time: \del{1.1}\add{1.8}ms)}
\label{fig:multiphysics}
\end{figure*}

\autoref{fig:multiphysics} demonstrates a multi-physics scenario where soft body bunnies, deformable cubes, and rigid gears are dropped onto cloth, involving simultaneous cloth-soft, soft-soft, soft-rigid, and cloth-rigid contacts. Our unified approach handles all interaction types with the same Planar-DAT scheme, maintaining penetration-free constraints throughout.
\subsubsection{Qualitative Comparison to OGC (Isotropic-DAT)}

\begin{figure}[t]
\centering
\newcommand{\figtwist}[2]{%
\begin{tikzpicture}
\node[anchor=south west,inner sep=0] (image) at (0,0) {\includegraphics[width=0.24\linewidth,trim=600 150 600 80,clip]{Figures/ClothTwistComparison/#1.jpg}};
\ifx\relax#2\relax\else\node[anchor=south west,fill=white,fill opacity=0.7,text opacity=1,inner sep=2pt] at (image.south west) {\scriptsize #2};\fi
\end{tikzpicture}}
\begin{tabular}{@{}c@{\hspace{2pt}}c@{\hspace{2pt}}c@{\hspace{2pt}}c@{}}
\multicolumn{4}{c}{\scriptsize Isotropic-DAT (OGC)} \\[2pt]
\figtwist{iso_DAT_600}{$t\!=\!10$} & \figtwist{iso_DAT_602}{$t\!=\!10.03$} & \figtwist{iso_DAT_659}{$t\!=\!11$} & \figtwist{iso_DAT_735}{$t\!=\!12.25$} \\[2pt]
\multicolumn{4}{c}{\scriptsize Planar-DAT (Ours)} \\[2pt]
\figtwist{planar_600}{$t\!=\!10$} & \figtwist{planar_602}{$t\!=\!10.03$} & \figtwist{planar_659}{$t\!=\!11$} & \figtwist{planar_735}{$t\!=\!12.25$} \\
\end{tabular}
\caption{Cloth twist release comparison. A cloth is twisted 6 full rotations over 10 seconds, then released at $t=10$s. Top: Isotropic-DAT (OGC). Bottom: Planar-DAT (Ours). ($\Delta t=1/600$s, avg.\ step time: Planar-DAT=1.1ms, Isotropic-DAT=1.0ms)}
\label{fig:clothTwistComparison}
\end{figure}

\autoref{fig:clothTwistComparison} compares the rebound behavior after releasing a twisted cloth. Planar-DAT exhibits significantly faster dynamics because its half-space constraints only restrict motion \emph{toward} surfaces, whereas isotropic-DAT's spherical constraints restrict motion in all directions, introducing artificial damping and preventing the twisted cloth from expanding.

\begin{figure*}[t]
\centering
\begin{minipage}{0.85\linewidth}
\centering
\includegraphics[trim=300 0 250 0,clip, width=0.265\linewidth]{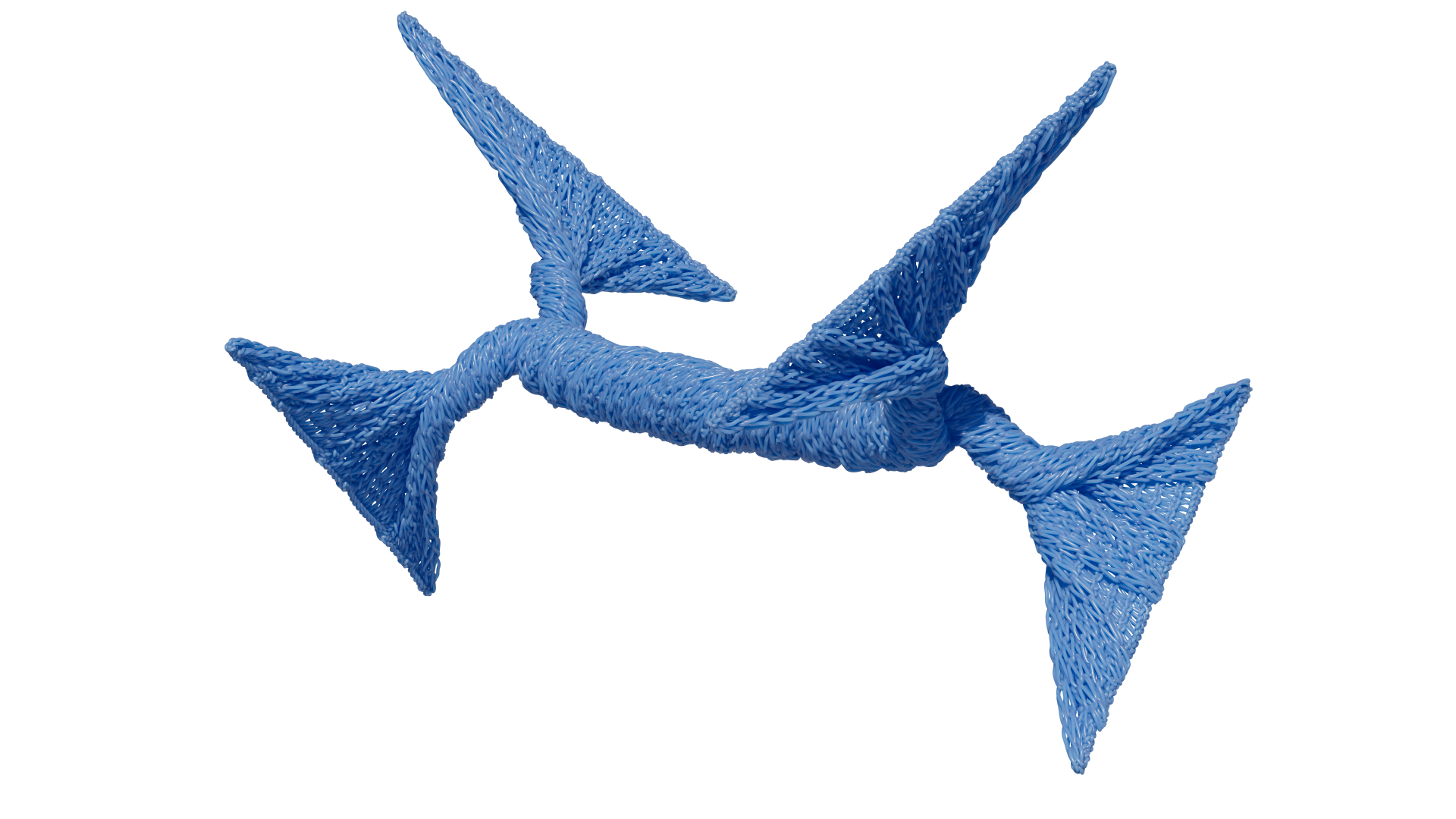}\hspace{-0.002\linewidth}%
\includegraphics[trim=300 0 250 0,clip, width=0.265\linewidth]{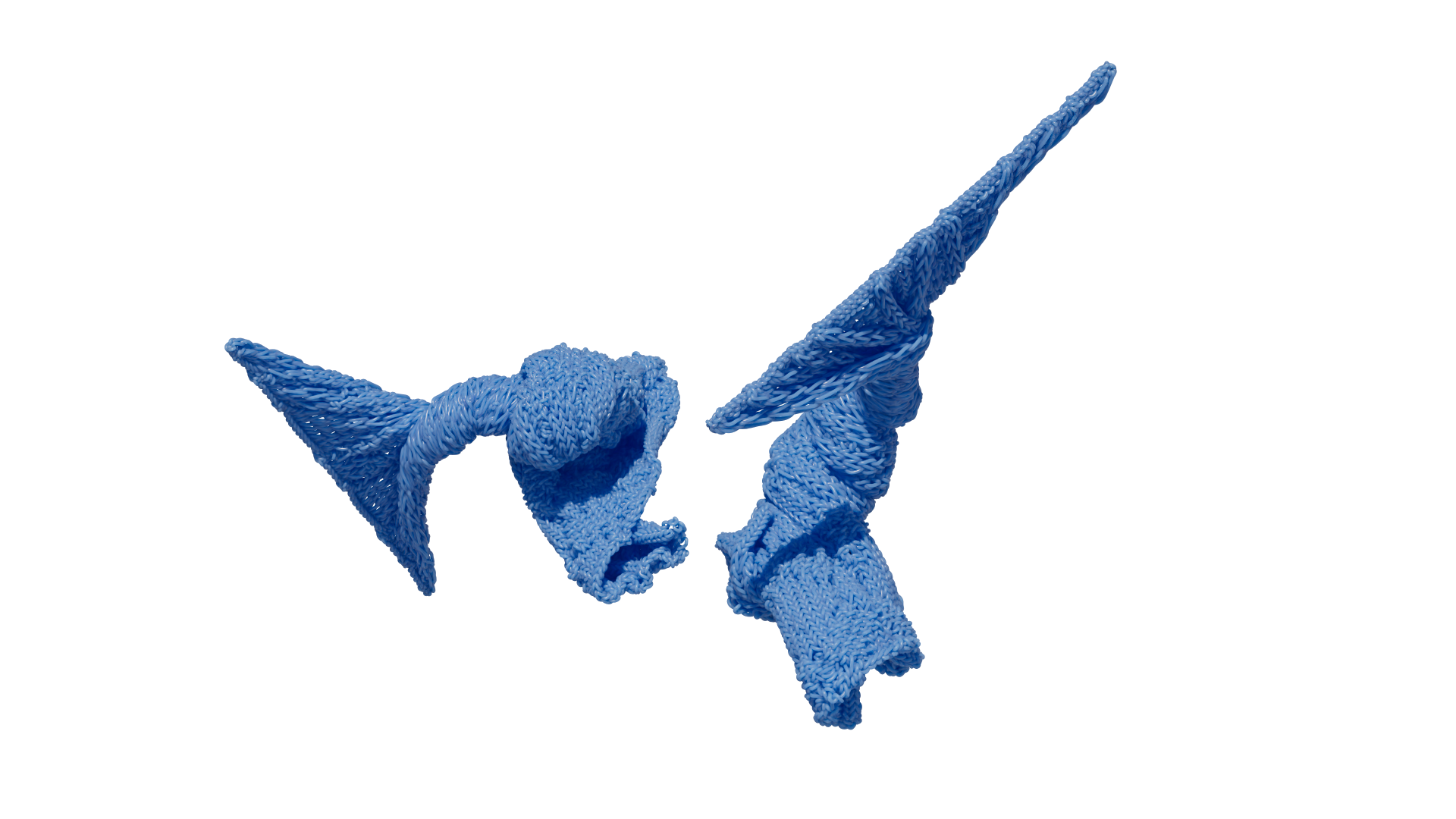}\hspace{-0.061\linewidth}%
\includegraphics[trim=300 0 250 0,clip, width=0.265\linewidth]{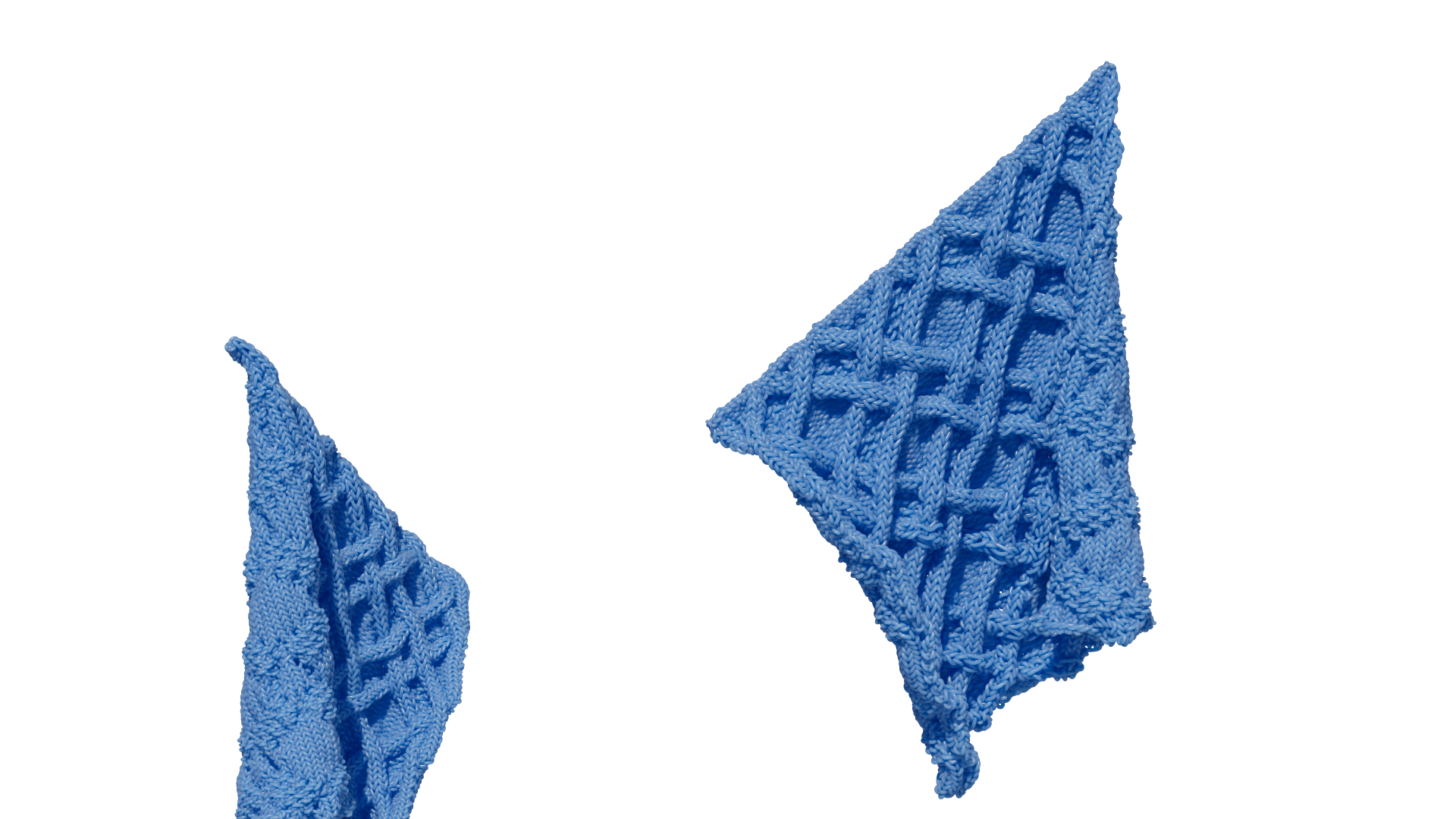}\hspace{-0.022\linewidth}%
\vrule width 0.5pt height 3.0cm depth 0pt\hspace{0.011\linewidth}%
\includegraphics[trim=300 0 250 0,clip, width=0.265\linewidth]{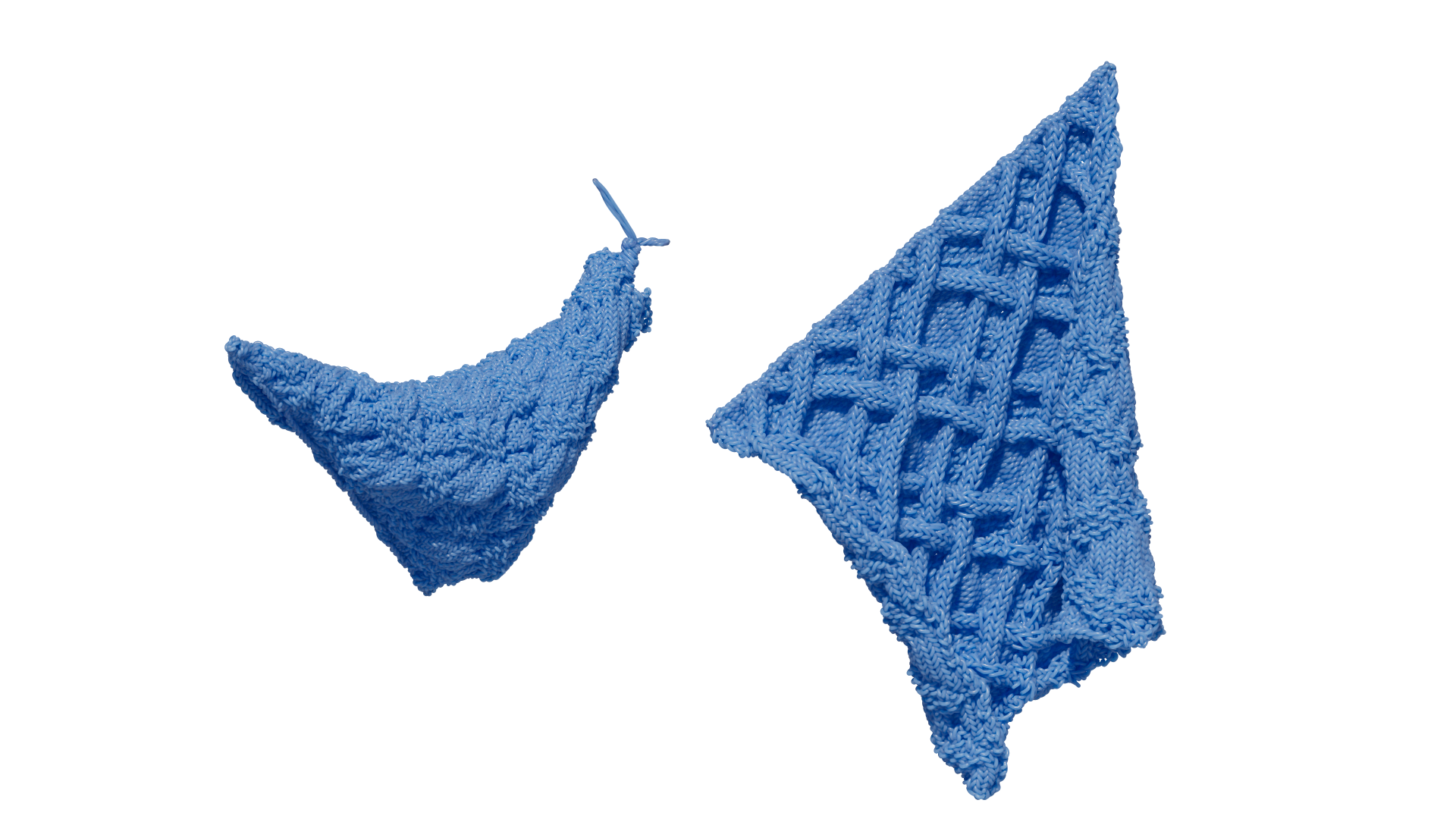}\\[-1.5em]
\makebox[0.778\linewidth][c]{\scriptsize Planar-DAT (Ours)}%
\makebox[0.167\linewidth][c]{\scriptsize Isotropic-DAT (OGC)}\hfill\\[-0.4em]
\end{minipage}
\caption{Isotropic-DAT can exhibit significant locking artifacts while our method successfully resolves the close contacts. When released, our method also produces significantly less artificial damping. ($\Delta t=1/2520$s, avg.\ step time: Planar-DAT=0.45ms, Isotropic-DAT=0.39ms)}
\Description{}
\label{fig:yarnTwist}
\end{figure*}

We further test the damping of close contacts in \autoref{fig:yarnTwist} by twisting two pieces of yarn cloth together. We only use edge segment collisions with yarn simulated using Cosserat rods \cite{hsu2025}. The close knitted nature of yarn cloth presents a significant challenge for isotropic-DAT, the use of which results in unstable deadlock artifacts. Despite the tight yarn-yarn contacts in all directions, our planar-DAT is able to correctly resolve the sliding and twisting when two of the four cloth ends are released. 

\begin{figure*}[t]
\centering
\newcommand{\figtread}[2]{%
\begin{tikzpicture}
\node[anchor=south west,inner sep=0] (image) at (0,0) {\includegraphics[width=0.19\linewidth,trim=280 300 200 250,clip]{Figures/Treadmill/#1.jpg}};
\ifx\relax#2\relax\else\node[anchor=south west,fill=white,fill opacity=0.7,text opacity=1,inner sep=2pt] at (image.south west) {\scriptsize #2};\fi
\end{tikzpicture}}
\begin{tabular}{@{}c@{\hspace{2pt}}c@{\hspace{2pt}}c@{\hspace{2pt}}c@{\hspace{2pt}}c@{}}
\multicolumn{5}{c}{\scriptsize Isotropic-DAT (OGC)} \\[-2pt]
\figtread{iso_000000}{} & \figtread{iso_000262}{} & \figtread{iso_000525}{} & \figtread{iso_000788}{} & \figtread{iso_001050}{} \\[2pt]
\multicolumn{5}{c}{\scriptsize Planar-DAT (Ours)} \\[-2pt]
\figtread{planar_000000}{$t\!=\!0$} & \figtread{planar_000262}{$t\!=\!4.4$} & \figtread{planar_000525}{$t\!=\!8.75$} & \figtread{planar_000788}{$t\!=\!13.1$} & \figtread{planar_001050}{$t\!=\!17.5$} \\
\end{tabular}
\caption{Treadmill cloth unrolling. A rolled cloth mesh is wound around a smaller rotating cylinder and attached at its edge to a larger rotating cylinder. As both cylinders rotate with matched surface velocity, the cloth unrolls. Top: Isotropic-DAT. Bottom: Planar-DAT. ($\Delta t=1/600$s, avg.\ step time: Planar-DAT=0.55ms, Isotropic-DAT=0.5ms)}
\label{fig:treadmill}
\end{figure*}

\autoref{fig:treadmill} shows a cloth drawn off a spiral winding by rotating cylinders, maintaining dense self-contact as layers slide past each other. Planar-DAT enables smooth layer separation and tangential motion, while Isotropic-DAT's spherical constraints resist sliding and can cause deadlocks, preventing full unwinding.

\begin{figure}[t]
\centering
\newcommand{\figunroll}[2]{%
\begin{tikzpicture}
\node[anchor=south west,inner sep=0] (image) at (0,0) {\includegraphics[width=0.32\linewidth,trim=300 200 300 200,clip]{Figures/UnrollCloth/#1.jpg}};
\ifx\relax#2\relax\else\node[anchor=south west,fill=white,fill opacity=0.7,text opacity=1,inner sep=2pt] at (image.south west) {\scriptsize #2};\fi
\end{tikzpicture}}
\begin{tabular}{@{}c@{\hspace{2pt}}c@{\hspace{2pt}}c@{}}
\multicolumn{3}{c}{\scriptsize Isotropic-DAT (OGC)} \\[2pt]
\figunroll{iso_000000}{} & \figunroll{iso_000075}{} & \figunroll{iso_000150}{} \\[2pt]
\multicolumn{3}{c}{\scriptsize Planar-DAT (Ours)} \\[2pt]
\figunroll{planar_000000}{$t\!=\!0$} & \figunroll{planar_000075}{$t\!=\!1.25$} & \figunroll{planar_000150}{$t\!=\!2.5$} \\
\end{tabular}
\caption{Cloth unrolling under gravity. A rolled cloth is released above a prism-shaped collider and unrolls as it falls. Top: Isotropic-DAT. Bottom: Planar-DAT. ($\Delta t=1/600$s, avg.\ step time: Planar-DAT=0.45ms, Isotropic-DAT=0.4ms)}
\label{fig:unrollCloth}
\end{figure}

\autoref{fig:unrollCloth} shows a rolled cloth unrolling under gravity onto a prism collider. As the outer layers separate and fall, dense self-contact occurs between adjacent cloth layers. Planar-DAT preserves natural falling dynamics, while Isotropic-DAT's isotropic constraints introduce additional resistance that slows the unrolling motion.

\begin{figure}[t]
\centering
\newcommand{\figosc}[2]{%
\begin{tikzpicture}
\node[anchor=south west,inner sep=0] (image) at (0,0) {\includegraphics[width=0.32\linewidth,trim=100 350 100 350,clip]{Figures/OscillatingCloth/#1.jpg}};
\ifx\relax#2\relax\else\node[anchor=south west,fill=white,fill opacity=0.7,text opacity=1,inner sep=2pt] at (image.south west) {\scriptsize #2};\fi
\end{tikzpicture}}
\begin{tabular}{@{}c@{\hspace{2pt}}c@{\hspace{2pt}}c@{}}
\multicolumn{3}{c}{\scriptsize Isotropic-DAT (OGC)} \\[2pt]
\figosc{iso_000000}{} & \figosc{iso_000200}{} & \figosc{iso_000631}{} \\[2pt]
\multicolumn{3}{c}{\scriptsize Planar-DAT (Ours)} \\[2pt]
\figosc{planar_000000}{$t\!=\!0$} & \figosc{planar_000200}{$t\!=\!3.33$} & \figosc{planar_000631}{$t\!=\!10.5$} \\
\end{tabular}

\caption{Oscillating cloth layers. Three stacked cloth layers with one edge fixed and the opposite edge oscillating sinusoidally. Top: Isotropic-DAT. Bottom: Planar-DAT. ($\Delta t=1/300$s, avg.\ step time: Planar-DAT=0.65ms, Isotropic-DAT=0.6ms)}
\label{fig:oscillatingCloth}
\end{figure}

\autoref{fig:oscillatingCloth} shows three stacked cloth layers undergoing sinusoidal oscillation at one edge. The layers maintain close proximity throughout the motion, creating continuous self-contact. Planar-DAT produces more responsive wave propagation through the cloth, while Isotropic-DAT's constraints damp the oscillatory motion, resulting in reduced amplitude over time.

\begin{figure}[t]
\centering
\newcommand{\figgift}[2]{%
\begin{tikzpicture}
\node[anchor=south west,inner sep=0] (image) at (0,0) {\includegraphics[width=0.24\linewidth,trim=650 50 650 50,clip]{Figures/GiftBoxDrop/#1.jpg}};
\ifx\relax#2\relax\else\node[anchor=south west,fill=white,fill opacity=0.7,text opacity=1,inner sep=2pt] at (image.south west) {\scriptsize #2};\fi
\end{tikzpicture}}
\begin{tabular}{@{}c@{\hspace{2pt}}c@{\hspace{2pt}}c@{\hspace{2pt}}c@{}}
\multicolumn{4}{c}{\scriptsize Isotropic-DAT (OGC)} \\[2pt]
\figgift{iso_000086}{} & \figgift{iso_000124}{} & \figgift{iso_000162}{} & \figgift{iso_000300}{} \\[2pt]
\multicolumn{4}{c}{\scriptsize Planar-DAT (Ours)} \\[2pt]
\figgift{planar_000086}{} & \figgift{planar_000124}{} & \figgift{planar_000162}{} & \figgift{planar_000300}{} \\[2pt]
\multicolumn{4}{c}{\scriptsize Reference (Converged)} \\[2pt]
\figgift{ref_000086}{$t\!=\!1.43$} & \figgift{ref_000124}{$t\!=\!2.07$} & \figgift{ref_000162}{$t\!=\!2.7$} & \figgift{ref_000300}{$t\!=\!5$} \\
\end{tabular}
\caption{Wrapped gift box drop. Four stacked soft body blocks wrapped with two cloth straps are dropped onto the ground. Top: Isotropic-DAT. Middle: Planar-DAT. Bottom: Reference (Converged). ($\Delta t=1/600$s, avg.\ step time: Planar-DAT=0.4ms, Isotropic-DAT=0.35ms)}
\label{fig:giftBoxDrop}
\end{figure}

\autoref{fig:giftBoxDrop} demonstrates coupled soft body and cloth simulation with multiple contact interactions. Four stacked deformable blocks, wrapped with two perpendicular cloth straps, fall and collide with the ground. We include a converged reference simulation using Planar-DAT, representing the ground-truth behavior. Planar-DAT closely matches this reference, producing free falling, while Isotropic-DAT exhibits overdamped behavior due to its isotropic motion constraints.

\subsection{Quantitative Evaluation}
We conduct a controlled benchmark to quantify the improvement of Planar-DAT over previous works. We generate 10 random meshes, each with 3000 disjoint triangles, and for each mesh apply 100 random deformations: either per-vertex displacements (Linear) or per-triangle rigid motions with random rotation and translation (Rot.+Lin.), both scaled to 1\% of the mesh extent. We then apply both truncation schemes and measure the per-vertex displacement magnitude preservation ratio. We use $r_q$ two times as large as the deformation scale in this experiment.

\autoref{table:TruncationComparison} reports the relative improvement of Planar-DAT over Isotropic-DAT and CCD across all 9,000,000 displacements in the benchmark. The mean improvement of more than 2.22$\times$ in both linear and linear+rotational setup  demonstrate that Planar-DAT consistently preserves significantly more of the intended motion. Notably, the vertices where Planar-DAT shows the largest improvement are exactly those in the most challenging regions, where the mesh is embraced by complex, overlapping contact and large truncations would otherwise be imposed by Isotropic-DAT. In these difficult cases, Planar-DAT's directional awareness allows it to unlock hundreds of times more safe deformation, resulting in substantially better fidelity for highly-colliding, tightly-coupled configurations.
Compared to global truncation with CCD, Planar-DAT preserves over 100$\times$ more displacement than CCD on average, demonstrating the critical importance of per-vertex directional constraints over global uniform scaling. This is due to CCD being extremely conservative: a single early collision anywhere in the mesh restricts the entire system. 

\begin{figure}[t]
    \centering
    \includegraphics[width=\linewidth]{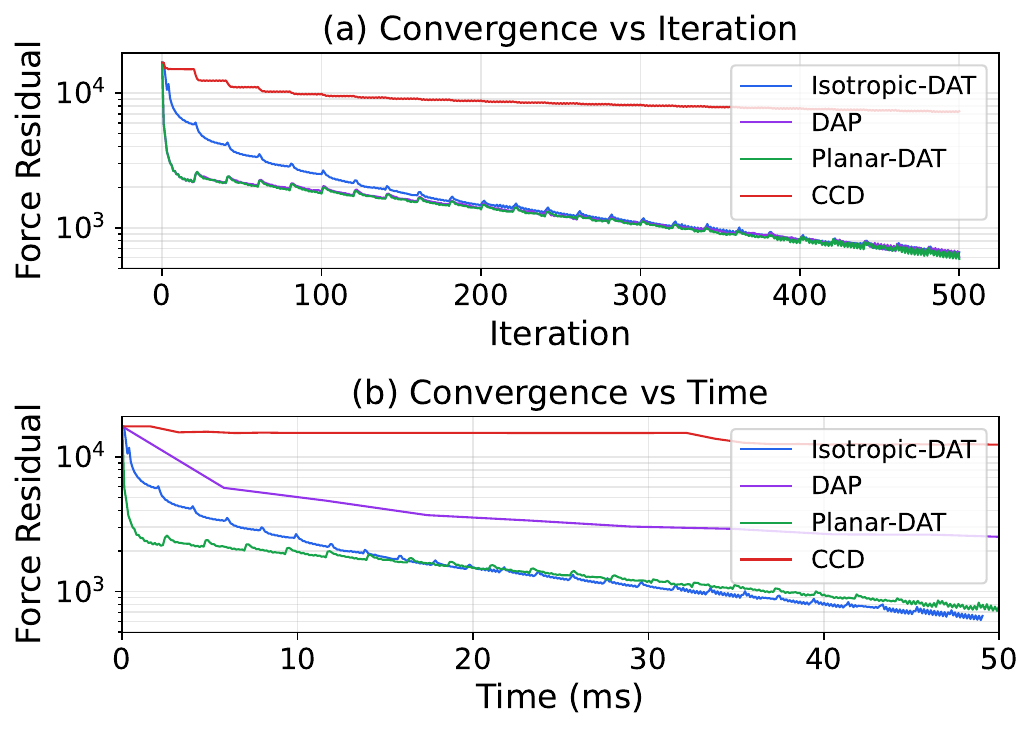}
    \caption{Convergence comparison on the cloth twist scenario (t=10s in \autoref{fig:clothTwistComparison}). (a)~Per-vertex average force residual vs.\ solver iteration. (b)~Force residual vs.\ wall-clock time. Isotropic-DAT and Planar-DAT have similar per-iteration cost, while CCD is $\sim$16$\times$ slower and DAP is $\sim$50$\times$ slower per iteration. Planar-DAT achieves the best convergence per unit time.}
    \label{fig:convergence}
\end{figure}

We evaluate solver convergence on the cloth twist scenario (\autoref{fig:clothTwistComparison}), where a sheet is twisted until packed with dense self-contact and then released. We record per-vertex average force residual across 500 VBD iterations at the step where release happens, with collision detection happens in every 20 iterations. As shown in \autoref{fig:convergence}(a), Planar-DAT reduces residual energy faster than Isotropic-DAT, especially in early iterations—important for real-time simulation. Although Planar-DAT incurs $\sim$10\% higher per-iteration cost (due to per-contact half-space evaluation), its directional constraints enable larger effective steps, requiring fewer iterations to reach a given accuracy.
By contrast, CCD-based truncation runs at $\sim$16\,ms per iteration—over $10\times$ slower than either DAT scheme (\autoref{fig:convergence}(b)). CCD's error curve also plateaus very quickly until the next collision happens, because the earliest collision that locks up the entire optimization is not detected by discrete collision detection and therefore lack\del{s of}\add{s} repulsive force\add{. We note that Local CCD~\citep{lan2023Stencil} can mitigate this TOI locking by confining CCD to per-stencil collision pairs, though a global CCD pass is still required to detect new pairs}.
Divide and Project (DAP), which iteratively projects onto half-space intersections, achieves similar per-iteration convergence to Planar-DAT but requires $\sim$50$\times$ more time, making it impractical for real-time use. This gap arises as DAP performs expensive and sequential iterative computations every solver iteration while both DAT variants efficiently reuse cached collision detection results. Planar-DAT thus achieves the best balance of convergence rate and computational cost.

\begin{table}[h]
\centering
\begin{minipage}[c]{0.18\linewidth}
    \centering
    \includegraphics[width=\linewidth,trim=500 50 550 50,clip]{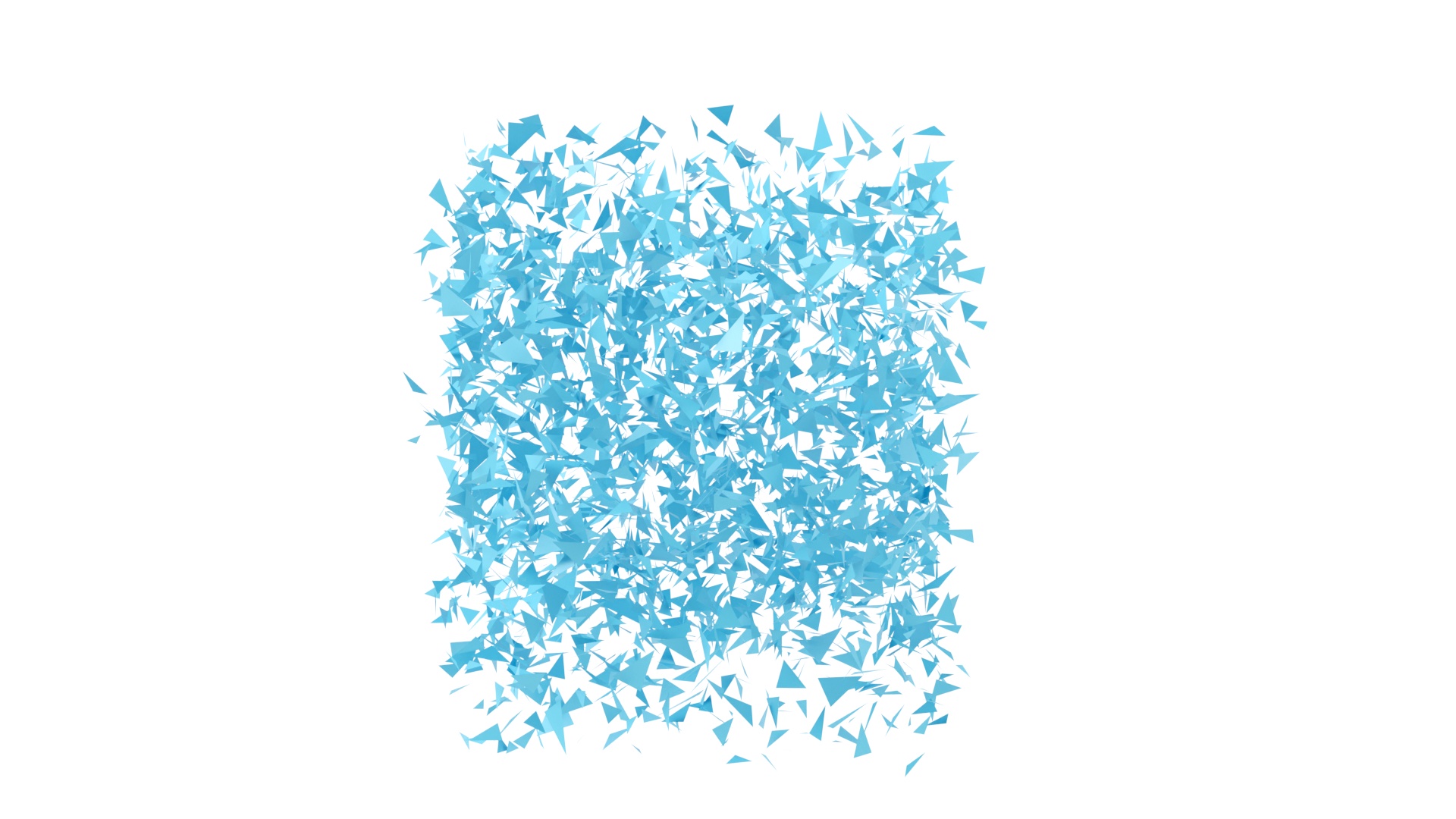}
\end{minipage}%
\hspace{0.01\linewidth}%
\begin{minipage}[c]{0.80\linewidth}
    \centering
    \small
    \begin{tabular}{|l|c|c|c|c|}
    \hline
    & \multicolumn{2}{c|}{\textbf{vs Isotropic-DAT}} & \multicolumn{2}{c|}{\textbf{vs Global CCD}} \\
    \cline{2-5}
    \textbf{Statistic} & Linear & Rot.+Lin. & Linear & Rot.+Lin. \\
    \hline
    Mean            & 2.22$\times$ & 2.34$\times$ & 265$\times$ & 289$\times$ \\
    Median          & 1.42$\times$ & 1.65$\times$ & 265$\times$ & 282$\times$ \\
    Max             & 157.6$\times$ & 178.2$\times$ & 460$\times$ & 512$\times$ \\
    \hline
    50th Percentile & 1.42$\times$ & 1.65$\times$ & 265$\times$ & 282$\times$ \\
    75th Percentile & 2.28$\times$ & 2.58$\times$ & 360$\times$ & 392$\times$ \\
    90th Percentile & 4.03$\times$ & 4.72$\times$ & 411$\times$ & 448$\times$ \\
    95th Percentile & 6.04$\times$ & 7.18$\times$ & 460$\times$ & 492$\times$ \\
    99th Percentile & 14.1$\times$ & 17.4$\times$ & 460$\times$ & 512$\times$ \\
    \hline
    \end{tabular}
\end{minipage}
\caption{Planar-DAT improvement over Isotropic-DAT and CCD in per-vertex displacement preservation, evaluated on randomly generated triangles (left).}
\label{table:TruncationComparison}
\end{table}

\section{Discussion}
We presented Divide and Truncate (DAT), a framework that guarantees penetration- and inversion-free simulation by partitioning space into exclusive regions and truncating displacements accordingly. Our Planar-DAT formulation incorporates displacement direction to avoid the artificial damping and deadlock of prior isotropic methods, while naturally extending to rotational degrees of freedom and animated objects. Integrated with a GPU-based solver, DAT enables robust real-time simulation of challenging multi-physics scenarios.
\paragraph{Limitations.}
Our method has several limitations worth noting. First, Planar-DAT relies on a query radius $r_q$ to limit the neighborhood search; contacts beyond this radius fall back to isotropic truncation, which can reintroduce damping in scenarios with large displacements. Second, while truncation guarantees penetration-free configurations, it does not preserve momentum or energy if the solver iteration is cut off before convergence, which may affect physical accuracy in high-precision applications. Third, for rigid bodies with rotational degrees of freedom, we rely on sampling and interval arithmetic to bound curved trajectories, which adds computational overhead and may require parameter tuning for very fast rotations.
\add{Finally, since we use OGC as the underlying contact model, our method inherits its contact energy discontinuity at non-convex regions of the boundary. This is an inherited limitation of the OGC formulation that is not addressed by this method.}

\bibliographystyle{ACM-Reference-Format}
\bibliography{my_references,references}
\appendix

\section{Proofs}

\subsection{Proof of Theorem 1: Vertex-Level Equivalence}
\label{proof:vertex_level_eq}

\begin{proof}
The exclusive regions are intersections of half-spaces, hence convex. Triangles and edges are also convex.

\textit{If:} If all faces stay in their regions, then each vertex (as part of its incident faces) must satisfy the constraints.

\textit{Only if:} If all vertices satisfy the constraints, then every point on a triangle or edge---being a convex combination of its vertices---also lies in the convex exclusive region.
\end{proof}

\subsection{Proof of Theorem 2: Rigid Face Plane Crossing}
\label{proof:rigid_face_plane}

\begin{proof}
\textit{If:} If a vertex crosses the plane, the face containing it clearly intersects the plane.

\textit{Only if:} Rigid motion preserves barycentric coordinates, so any point on the face remains a fixed convex combination of its vertices. The signed distance of that point to the plane is therefore a convex combination of the vertices' signed distances. If all vertices stay on the positive side, so does every point on the face.
\end{proof}

\section{Divide and Project with Dykstra's Projection}
\label{sec:dykstra}
Dykstra's Projection is a well-known dual coordinate ascent algorithm for finding the projection of a point onto the non-empty intersection of multiple convex sets. It is known to work well for simple sets and naturally extends into projections in Hilbert space. 
We provide one derivation of the algorithm here. For additional details, see \citet{boyle1986method}.

We would like to solve a minimization of the form
\begin{align}
    \min_{x\in C_i} \frac{1}{2}|x-r|^2_\MM
\end{align}
where $C_i$ are closed convex sets and $\MM$ is some symmetric positive-definite (SPD) metric. We begin by converting the constraint into their corresponding indicator functions
\begin{align}
    c_i(x) = \begin{cases}
0 & \text{if } x \in C_i,\\
\infty  & \text{otherwise}
\end{cases}
\end{align}
This converts the optimization into the unconstrained
\begin{align}
    \min_x \frac{1}{2}|x-r|^2_\MM+\sum c_i(x).
\end{align}
As written, we currently have to check $x$ against every set. To allow us to view each set independently, we follow ADMM and projective dynamics to introduce the auxiliary variables $x_i$ with the form
\begin{align}
    \min_x \frac{1}{2}|x-r|^2_\MM+\sum c_i(x_i)\\
    s.t.\;x-x_i=0\;\forall i.
\end{align}
The Lagrangian of this constrained optimization with the dual variables $\lambda_i$ then becomes 
\begin{align}
    \mathcal{L} = \frac{1}{2}|x-r|^2_\MM+\sum \left(c_i(x_i)+\lambda_i^T (x-x_i)\right).
\end{align}

\subsection{Dual formulation}

To tackle this, we seek the dual of our Lagrangian, $g(\lambda)$.
\begin{align}
    g(\lambda) = \inf_{x,x_i} \mathcal{L}
\end{align}
The solution to our original optimization can then be found at $\max_{\lambda} g(\lambda)$. 

Since we'd like to iterate over each projection one at a time, we invoke block coordinate descent methods to view each degree of freedom in isolation while keeping all others fixed. 

\subsubsection{Dual Over $x$} For the dual over our primal $x$, the dual becomes
\begin{align}
\label{eq:dual_x}
    g(\lambda) = \inf_{x} \frac{1}{2}|x-r|^2_\MM + \sum \lambda_i^T (x-x_i).
\end{align}
Since this is a simple quadratic, we find the solution
\begin{align}
    s &= \sum \lambda_i,\\
    \label{eq:x_opt}
    x &=r-\MM^{-1}s
\end{align}
The exact value of $g(x)$ is unimportant. For now, we only need to know that $x=r-\MM^{-1}s$. 

\subsubsection{Dual Over $x_i$}

For the dual over each particular $x_i$, we can substitute in $x=r-\MM^{-1}s$ to find
\begin{align}
    g_i(\lambda) &= \inf_{x_i} \frac{1}{2}|x-r|^2_\MM+\sum \left(c_i(x_i)+\lambda_i^T (x-x_i)\right)\\
    &= \inf_{x_i} \frac{1}{2}|\MM^{-1}s|^2_\MM+\sum \left(c_i(x_i)+\lambda_i^T (r-\MM^{-1}s-x_i)\right)\\
\label{eq:dual_x_i}
    &= \inf_{x_i} -\frac{1}{2}|s|^2_{\MM^{-1}}+\sum \left(c_i(x_i)+\lambda_i^T (r-x_i)\right)
\end{align}

\subsection{Dual Ascent on $\lambda_i$}
As mentioned earlier, we would like to view this as a block coordinate descent (ascent) algorithm. Starting with the initial guess $x^0=r$, we view each set $C_i$ and its corresponding dual variable $\lambda_i$ in isolation. We then seek to maximize \autoref{eq:dual_x_i} at every iteration. As all other $\lambda$s are viewed as fixed, it is convenient to express this $s$ as $s_i+\lambda_i$, where $s_i$ is the sum of all $\lambda$ excluding $\lambda_i$. From \autoref{eq:x_opt}, we can compute this as
\begin{align}
    s_i=\MM(r-x)-\lambda^k_i
\end{align}
Dropping all constant terms in $g_i$, $c_i(x_i)- \lambda_i^T x_i$ can then be replaced by the support function $\sigma_{C_i}$.
\begin{align}
    \sigma_{C_i}(d) &= \max_{x \in C_i} d^T x\\
    g_i(\lambda_i) &= \inf_{x_i} -\frac{1}{2}|s_i+\lambda_i|^2_{\MM^{-1}} +\lambda_i^T r + c_i(x_i)-\lambda_i^T x_i + const\\
     &= -\frac{1}{2}|s_i+\lambda_i|^2_{\MM^{-1}}  - \sigma_{C_i}(\lambda_i) +r^T\lambda_i+ const
\end{align}

We seek the maximizer of the dual. Here, we complete the square by adding the constant $\frac{1}{2}|\MM r-s_i|^2_{\MM^{-1}}$
\begin{align}
    \lambda^{k+1}_i& =arg \max_{\lambda_i} g_i(\lambda_i)\nonumber\\
    &= arg \min_{\lambda_i} \frac{1}{2}|s_i+\lambda_i|^2_{\MM^{-1}} + \sigma_{C_i}(\lambda_i) - r^T\lambda_i\nonumber\\
    &= arg \min_{\lambda_i} \frac{1}{2}|\lambda_i|^2_{\MM^{-1}}+(s_i^T \MM^{-1}- r^T)\lambda_i + \sigma_{C_i}(\lambda_i) \nonumber\\
    &= arg \min_{\lambda_i} \frac{1}{2}|\lambda_i|^2_{\MM^{-1}}+(s_i^T- r^T\MM)\MM^{-1}\lambda_i + \sigma_{C_i}(\lambda_i)\nonumber\\
    &= arg \min_{\lambda_i} \frac{1}{2}|\lambda_i|^2_{\MM^{-1}}+(s_i - \MM r)^T\MM^{-1}\lambda_i +\frac{1}{2}|\MM r-s_i|^2_{\MM^{-1}}+ \sigma_{C_i}(\lambda_i)\nonumber\\
    &= arg \min_{\lambda_i} \frac{1}{2}|\MM r - s_i - \lambda_i|^2_{\MM^{-1}} + \sigma_{C_i}(\lambda_i)
\end{align}
This is now suspiciously similar to a projection. Let us define $y=r - \MM^{-1}s_i$ 
\begin{align}
    \label{eq:y_vec}
    y&=r - \MM^{-1}s_i\\
    \lambda^{k+1}_i &= arg \min_{\lambda_i} \frac{1}{2}|\MM y - \lambda_i|^2_{\MM^{-1}} + \sigma_{C_i}(\lambda_i)   
\end{align}

\subsection{The Projection Operator}
To see how this is related to projections, we only need to write the down projection operator explicitly and compare their dual.
\begin{align}
    P^\MM_{C_i}(r) = arg \min_x &\frac{1}{2}|x-r|^2_{\MM} + c_i(\hat{x})\nonumber\\
    & s.t.\;x=\hat{x}\nonumber
\end{align}
The rest of the process is very similar to everything we've done before. We use the dual over $x$ to find the relation between $x$ and $\lambda$
\begin{align}
    \mathcal{L} &= \frac{1}{2}|x-r|^2_{\MM} + c_i(\hat{x}) + \lambda^T(x-\hat{x})\nonumber\\
    g_x(\lambda) &= \inf_x \frac{1}{2}|x-r|^2_{\MM} + \lambda^T x + const\nonumber\\
    x &=r-\MM^{-1}\lambda\nonumber
\end{align}
We plug this back into the dual for $\hat{x}$.
\begin{align}
    g_{\hat{x}}(\lambda) &= \inf_{\hat{x}} \frac{1}{2}|r-\MM^{-1}\lambda-r|^2_{\MM} + c_i(\hat{x}) + \lambda^T(r-\MM^{-1}\lambda-\hat{x})\nonumber\\
    &=\inf_{\hat{x}}-\frac{1}{2}|\lambda|^2_{\MM^{-1}} + \lambda^T r + c_i(\hat{x}) - \lambda^T\hat{x}\nonumber\\
    &= -\frac{1}{2}|\lambda|^2_{\MM^{-1}} - \sigma_{C_i}(\lambda) + \lambda^T r \nonumber
\end{align}
Then we maximize the dual and rearrange the terms into perfect squares by adding the constant $\frac{1}{2}|\MM r|^2_{\MM^{-1}}$. 
\begin{align}
    \lambda &= arg \min_{\lambda} \frac{1}{2}|\lambda|^2_{\MM^{-1}} + \sigma_{C_i}(\lambda) - (\MM r)^T\MM^{-1}\lambda + \frac{1}{2}|\MM r|^2_{\MM^{-1}}\nonumber\\
    &= arg \min_{\lambda} \frac{1}{2}|\MM r-\lambda|^2_{\MM^{-1}} + \sigma_{C_i}(\lambda)\nonumber
\end{align}
This is the exact form we obtained earlier for $\lambda_i^{k+1}$. Although it seems like we've just retraced our steps, we've now proven that each block coordinate descent step on our original $C_i$ is actually just the projection onto that set. The only thing left is then to find a formula for $\lambda$. 

Since we already know that $x = r-\MM^{-1}\lambda$ and $x=P^\MM_{C_i}(r)$, it follows naturally that $\lambda=\MM(r-P^\MM_{C_i}(r))$. Returning back to our original problem, we find
\begin{align}
    \lambda^{k+1}_i=\MM(y-P^\MM_{C_i}(y))
\end{align}

\subsection{The Dykstra Projection Algorithm}
We now only have to put the pieces together. Since $\lambda$ is always used in conjunction with $\MM$, we store $\MM^{-1}\lambda_i$ instead. Using \autoref{eq:x_opt}, Dykstra computes $y$ as 
\begin{align}
    y^k_i &= r - \MM^{-1}s_i\nonumber\\
    &= (r - \MM^{-1}s_i - \MM^{-1}\lambda^k_i) + \MM^{-1}\lambda^k_i\nonumber\\
    &= (r - \MM^{-1}s) + \MM^{-1}\lambda^k_i\nonumber\\
    &= x^k + \MM^{-1}\lambda^k_i.
\end{align}
$x$ is updated as 
\begin{align}
    x^{k+1} = P^\MM_{C_i}(y^k_i).
\end{align}
Then, $\lambda$ is updated as
\begin{align}
    \MM^{-1}\lambda_i^{k+1} = y^k_i - x^{k+1}.
\end{align}
This gives the general form of Dykstra using SPD metrics. You would notice that since we store everything as $\MM^{-1}\lambda_i$, this actually does not change the original algorithm at all. We only need to take care to perform the projection with $\MM$ in mind. For half-spaces, this is a trivial operation which can be written as a least norm problem. However, no special handling of the iterates are needed.

\end{document}